\newcommand{\be}{\begin{equation}}
\newcommand{\ee}{\end{equation}}
\newcommand{\bea}{\begin{eqnarray}}
\newcommand{\eea}{\end{eqnarray}}
\newcommand\nn{\nonumber}
\theoremstyle{plain}
\newtheorem{thm}{\protect\theoremname}
\theoremstyle{remark}
\theoremstyle{plain}
\newtheorem{lem}[thm]{\protect\lemmaname}
\theoremstyle{plain}
\newtheorem{cor}[thm]{\protect\corollaryname}
\theoremstyle{definition}
\providecommand{\corollaryname}{Corollary}
\providecommand{\definitionname}{Definition}
\providecommand{\lemmaname}{Lemma}
\providecommand{\remarkname}{Remark}
\providecommand{\theoremname}{Theorem}
\begin{document}

\title{Quantum Complementarity through Entropic Certainty Principles}
\begin{flushright}
\end{flushright}

\author{Javier M. Magan and Diego Pontello}
\affiliation{Instituto Balseiro, Centro Atomico Bariloche S. C. de Bariloche, Rio Negro, R8402AGP, Argentina}

\begin{abstract}
We approach the physical implications of the non-commutative nature of Complementary Observable Algebras (COA) from an information theoretic perspective. In particular, we derive a general \textit{entropic certainty principle} stating that the sum of two relative entropies, naturally related to the COA, is equal to the so-called algebraic index of the associated inclusion. Uncertainty relations then arise by monotonicity of the relative entropies that participate in the underlying entropic certainty. Examples and applications are described in quantum field theories with global symmetries, where the COA are formed by the charge-anticharge local operators (intertwiners) and the unitary representations of the symmetry group (twists), and in theories with local symmetries, where the COA are formed by Wilson and 't Hooft loops. In general, the entropic certainty principle naturally captures the physics of order/disorder parameters, a feature that makes it a generic handle for the information theoretic characterization of quantum phases.

\end{abstract}

\maketitle


\section{The Uncertainty Principle. Old and New.} 

The uncertainty principle limits the precision of potential experiments in a quantum mechanical world. It states that the product of statistical errors is bounded away from zero. The paradigmatic example is that of position/momentum operators
\begin{equation}
\sigma_{x}\sigma_{p}\geqslant \hbar/2\,,
\end{equation}
where $\sigma$ stands for the variance. Such bounds are derived and improved by allowing certain state dependence. In its generalized form, applicable to any pair of observables $A$ and $B$, it reads
\begin{equation}
\sigma_{A}^{2}\sigma_{B}^{2}\geqslant \vert \frac{1}{2}\langle\lbrace A, B\rbrace\rangle -\langle A\rangle\langle B\rangle\vert^{2}+\vert\langle\frac{1}{2i} [A,B]\rangle\vert^{2}\,.
\end{equation}
But the implications of quantum complementarity can be seen through different glasses as well. The most natural one is the entropy, and the emerging relations are called entropic uncertainty relations, see \cite{Bialynicki-Birula2011,Wehner_2010,RevModPhysColes} for recent reviews with a historical account. A famous example is \cite{PhysRevLett.60.1103}
\begin{equation}
H(A)+H(B)\geqslant -\log c \,,
\end{equation}
where $H(A)$ and $H(B)$ are the classical Shannon entropies associated to the measurements outcomes of observables $A$ and $B$ in a $d$-dimensional Hilbert space, and 
\be
c:=\mathrm{max} \{ \left|\left\langle \psi_{i}^{A}\mid\psi_{j}^{B}\right\rangle \right|\,:\,i,j=1,\ldots,d \}\;,
\ee
is the maximum overlap between any two eigenvectors $\vert\psi_{i}^{A}\rangle$ and $\vert\psi_{j}^{B}\rangle$ of $A$ and $B$.

Entropic approaches are motivated for different reasons. First, in some cases, entropic uncertainty relations are stronger than the conventional ones \cite{Bialynicki-Birula2011}. Second, they have a precise operational meaning in the context of information theory \cite{journals/bstj/Shannon48}. Finally, they suggest further generalizations of the uncertainty principle, such as including memories (side-information), to which the observer has access too. A well-studied example is \cite{PhysRevLett.103.020402,PhysRevLett.108.210405}
\be
H(A\vert M)+ H(B\vert M)\geqslant -\log c \,,
\ee
where $H(A\vert M)$ and $H(B\vert M)$ are the conditional entropies with respect to the memory $M$. From the previous relations, it can also be derived
\begin{equation}\label{mi}
I(A,M)+I(B,M)\leqslant \log \left(d^2 c \right)\,,
\end{equation}
where $I(A,B) := S(A)+S(B)-S(A \cup B)$ stands for the mutual information. Intuitively, the uncertainty principle constrains the amount of information that $M$ can have about a certain observable $A$, given the amount it has about a complementary one $B$. A nice way to prove this relation uses the monotonicity of the relative entropy \cite{PhysRevLett.108.210405}. Below, we use the monotonicity of the relative entropy in a different way to obtain a large family of uncertainty relations.

In this letter, we explore two novel avenues or consequences of quantum complementarity. On one hand, instead of two observables, we consider two Complementary Operator Algebras (COA) $\mathcal{A}$ and $\mathcal{B}$. These are two observable subalgebras containing at least some operators which do not commute with each other. These algebras can be considered the ``laboratories'' (specifying the set of allowed experiments) associated with different observers. On the other hand, instead of inequalities, we seek to find  ``entropic certainties''. This is inspired by recent results in the context of QFT, where the first example of an entropic certainty principle was discovered \cite{Casini:2019kex}. Such an example, together with other applications, is described in the last section.

\section{Relative entropy and Complementary Observable Algebras.} 

Instead of entropies, our principle will use a different information measure: the relative entropy. This quantity measures the distinguishability between two states with respect to a given algebra $\mathcal{M}$. For a finite dimensional algebra $\mathcal{M}$, it is defined by
\be
S_{\mathcal{M}}\left(\omega\mid\phi\right):=\mathrm{Tr}_{\mathcal{M}}\left(\rho^\omega\left(\log\rho^\omega-\log\rho^\phi\right)\right)\,,\label{re_def}
\ee
where $\mathrm{Tr}_{\mathcal{M}}$ is the canonical trace on $\mathcal{M}$, and $\rho^\omega$, $\rho^\phi$ are the density matrices representing the underlying states (see \cite{ohya1993quantum} or appendix A).

Using relative entropy is convenient from several perspectives. First, all other information quantities can be derived from it. For example, for a full matrix algebra acting on a finite $d$-dimensional Hilbert space the entropy is
\be
S_{\mathcal{M}} (\omega) := -\mathrm{Tr}_{\mathcal{M}}\left(\omega\log\omega\right) =\log d - S_{\mathcal{M}}\left(\omega\mid \tau\right)\,, \label{sre}
\ee
where $\tau:=\mathbf{1}/d$ is the maximally mixed density matrix. Second, relative entropy shows monotonicity under general quantum channels and restrictions onto subalgebras \cite{ohya1993quantum}. Finally, relative entropy is well-defined across different types of algebras, including type $III$ von Neumann algebras appearing in QFT.

To motivate our entropic certainty principle, notice  that in \eqref{sre} the maximally mixed state $\tau$ can be defined by composing $\omega$ with a map $\varepsilon:\mathcal{M}\rightarrow \mathds{1}$, defined by $\varepsilon (m) := \tau (m)\mathbf{1}$. This rewriting leads to
\be
S_{\mathcal{M}} (\omega)=\log d - S_{\mathcal{M}}(\omega\mid \omega\circ \varepsilon)\,,
\ee
and suggests further generalizations. First, the map $\varepsilon:\mathcal{M}\rightarrow \mathds{1}$ is one example of a whole space of such maps. Besides, instead of the identity as the target algebra, we could choose any subalgebra $\mathcal{N}\subset\mathcal{M}$. The maps $\varepsilon : \mathcal{M}\rightarrow \mathcal{N}$ are called conditional expectations \cite{ohya1993quantum}. They are positive, linear, and unital maps from an algebra $\mathcal{M}$ to a subalgebra $\mathcal{N}$ satisfying
\be
\hspace{-1mm} \varepsilon\left(n_{1}\,m\,n_{2}\right)=n_{1}\varepsilon\left(m\right)n_{2}\,,\hspace{3mm} \forall m\in\mathcal{M},\,\forall n_{1},n_{2}\in\mathcal{N}.\label{ce_def_prop}
\ee
These maps are the mathematical definition of what restricting our observational abilities means. Examples are tracing out part of the system or retaining the neutral part of a subalgebra under the action of a certain symmetry group. If $\mathcal{M}= \mathcal{N}\vee \mathcal{A}$ is the algebra generated by $\mathcal{N}$ and certain algebra $\mathcal{A}$, we say the conditional expectation ``kills'' $\mathcal{A}$. Notice also that the conditional expectation can be used to lift a state $\omega_{\mathcal{N}}$ in $\mathcal{N}$ to a state $\omega=\omega_{\mathcal{N}}\circ\varepsilon$ in $\mathcal{M}$.

In this context $S_{\mathcal{M}}(\omega \mid \omega\circ\varepsilon)$ measures the amount of information in $\mathcal{A}$ which is erased by $\varepsilon$. It takes into account side correlations with $\mathcal{N}$, which is the subalgebra left invariant by $\varepsilon$. In this light, complementarity refers to the algebra $\mathcal{A}$, while $\mathcal{N}$ plays the role of quantum memory. What remains to be answered is who plays the role of the Complementary Observable Algebra (COA). A canonical candidate arises from the following diagram
\begin{eqnarray}
\mathcal{M} & \overset{\varepsilon}{\longrightarrow} & \mathcal{N}\nonumber \\
\updownarrow\prime\! &  & \:\updownarrow\prime\\ \label{ecr_diagr}
\mathcal{M}' & \overset{\varepsilon'}{\longleftarrow} & \mathcal{N}'\,.\nonumber 
\end{eqnarray}
In this diagram, going vertically takes the algebras to its commutants, while horizontally in the arrow direction means restricting to the target subalgebra. If $\varepsilon$ kills the algebra $\mathcal{A}\subset\mathcal{M}$, the \textit{dual conditional expectation} $\varepsilon '$ kills the COA $\tilde{\mathcal{A}}\subset\mathcal{N}'$. Notice that $\tilde{\mathcal{A}}$ does not commute with $\mathcal{A}$.

As an example, take $\mathcal{M}$ as the abelian algebra $\mathcal{X}$ generated by the position operator, and a conditional expectation that kills the full $\mathcal{M}=\mathcal{A} =: \mathcal{X}$. We then obtain
\bea
\mathcal{X} & \overset{\varepsilon}{\longrightarrow} & \mathds{1}\nonumber \\
\updownarrow\prime \!\! &  & \,\updownarrow\prime\\
\mathcal{X} & \overset{\varepsilon'}{\longleftarrow} & \mathcal{X}\vee\mathcal{P}\,.\nonumber 
\eea
We conclude that the COA of $\mathcal{X}$ is $\mathcal{P}$, the algebra generated by the momentum operator, as expected.

Having the COAs $\mathcal{A}$ and $\tilde{\mathcal{A}}$, we might expect an uncertainty relation of the form
\be
S_{\mathcal{M}}\left(\omega|\omega\circ\varepsilon\right)+S_{\mathcal{N}'}\left(\omega|\omega\circ\varepsilon'\right)\leq\log\lambda\,,
\ee
where $\lambda$ is a constant to be determined. This is analogous to equation \eqref{mi} but generalized in different ways. The difference is that, given $\varepsilon$, we have some freedom when choosing $\varepsilon '$. In this letter, we show we can choose it so as to obtain the \emph{entropic certainty principle}
\be
S_{\mathcal{M}}\left(\omega|\omega\circ\varepsilon\right)+S_{\mathcal{N}'}\left(\omega|\omega\circ\varepsilon'\right)=\log\lambda \,, \label{cerp}
\ee
where $\lambda$ is a fixed number and $\omega$ an arbitrary pure state. This is a generalization of the known equality of entanglement entropy associated to commutant observable algebras in a global pure state to non-commutative COA.

\section{The space of conditional expectations.}  \label{section_ce}

To approach this problem, we need to understand the space of conditional expectations $C(\mathcal{M},\mathcal{N})$ for a generic inclusion of algebras $\mathcal{N}\subset\mathcal{M}$. This was studied in \cite{umegaki1,umegaki2,umegaki3,umegaki4}, but here we use a different approach. Irrespective of the inclusion, both algebras have the following general form
\be
\hspace{-1.8mm} \mathcal{M}  \cong  \bigoplus_{j=1}^{z_{\mathcal{M}}}M_{m_{j}}(\mathbb{C})\otimes\mathds{1}_{m'_{j}}, \hspace{2.2mm}
\mathcal{\mathcal{N}} \cong  \bigoplus_{k=1}^{z_{\mathcal{N}}}M_{n_{k}}(\mathbb{C})\otimes\mathds{1}_{n'_{k}},\label{m_inc}
\ee
where $M_{m}(\mathbb{C})$ is the full matrix algebra of $m\times m$ complex matrices. The minimal central projectors of the algebra $\mathcal{M}$ are $P_{j}^{\mathcal{M}}:=\mathbf{1}_{m_{j}}\otimes\mathbf{1}_{m'_{j}}$, with $j=1,\ldots,z_{\mathcal{M}}$. And similarly for the algebra $\mathcal{N}$. We define the subalgebras $\mathcal{M}_{k} := P_{k}^{\mathcal{N}}\mathcal{M}P_{k}^{\mathcal{N}}$.

Given these algebras, generic inclusions $\mathcal{N}\subset\mathcal{M}$ are characterized by the number of times $\mu_{kj}$ the factor $M_{n_{k}}(\mathbb{C})$ of $\mathcal{N}$ is included in the factor $M_{m_{j}}(\mathbb{C})$ of $\mathcal{M}$ \cite{Jones1983,teruya,giorlongo,Giorgetti:2018nji}. They must satisfy $m_{j} = \sum_{k=1}^{z_{\mathcal{N}}}n_{k}\times\mu_{kj}\,$.

To understand the space $C(\mathcal{M},\mathcal{N})$ it is useful to split $\mathcal{N}\subset\mathcal{M}$ in two steps
\begin{equation}
\mathcal{M}\supset\bigoplus_{k=1}^{z_{\mathcal{N}}}M_{n_{k}}\left(\mathbb{C}\right)\otimes\left(\bigoplus_{j=1}^{z_{\mathcal{M}}}M_{\mu_{kj}}(\mathbb{C})\otimes\mathds{1}_{m'_{j}}\right)\supset\mathcal{N}\,,
\end{equation}
where we must also have $n'_{k}=\sum_{j=1}^{z_{\mathcal{M}}}\mu_{kj}\times m'_{j}$. The hint towards $C(\mathcal{M},\mathcal{N})$ is that any $\varepsilon \in C(\mathcal{M},\mathcal{N})$ arises by composing a conditional expectation from $\mathcal{M}$ to the intermediate algebra, which turns out to be unique, and a conditional expectation from such intermediate algebra to $\mathcal{N}$, whose space can be simply derived. The following lemma determines $C(\mathcal{M},\mathcal{N})$ completely.

\begin{lem} \label{lemma_cem}
Let $\mathcal{N}\subset\mathcal{M}$ be a generic inclusion of finite dimensional algebras, and $\varepsilon\in C(\mathcal{M},\mathcal{N})$ a conditional expectation. Then, there exist unique conditional expectations  $\varepsilon_k\in C(\mathcal{M}_k,\mathcal{N}_k)$ such that
\be
\varepsilon(A) =\bigoplus_{k=1}^{z_{\mathcal{N}}}\varepsilon_{k}(A_{k})\,, \hspace{5mm} A_{k}:=P_{k}^{\mathcal{N}}AP_{k}^{\mathcal{N}} \in \mathcal{M}_k \,.
\ee
These conditional expectations are uniquely determined by states $\rho_{k}^{\varepsilon}=\bigoplus_{j=1}^{z_{\mathcal{M}}}p_{kj}^{\varepsilon}\,\rho_{kj}^{\varepsilon}$ on $\mathcal{M}_{k}\cap\mathcal{N}'\simeq\bigoplus_{j=1}^{z_{\mathcal{M}}}M_{\mu_{kj}}(\mathbb{C})$ (with $\sum_{j=1}^{z_{\mathcal{M}}}p_{kj}^{\varepsilon}=1$) through
\be
\varepsilon_{k}(A_{k}) :=\left(\sum_{j=1}^{z_{\mathcal{M}}}p_{kj}^{\varepsilon}\,\mathrm{Tr}\left(\rho_{kj}^{\varepsilon}C_{kj}\right)\right)\left(B_{k}\otimes\mathbf{1}_{n'_{k}}\right) \, , \label{par_ce}
\ee
where $A_{k}:= B_{k}\otimes\left(\bigoplus_{j=1}^{z_{\mathcal{M}}}C_{kj}\otimes\mathbf{1}_{m'_{j}}\right)$, $B_{k}\in M_{n_{k}}(\mathbb{C})$, and $C_{kj}\in M_{\mu_{kj}}(\mathbb{C})$. Equation \eqref{par_ce} is extended to a general element $A_{k}\in\mathcal{M}_{k}$ by linearity.
\end{lem}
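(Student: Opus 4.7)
The strategy has two reductions followed by a direct classification. First, I would reduce to the case where the target algebra is a factor by splitting along the central projectors $P_k^{\mathcal{N}}$ of $\mathcal{N}$. Since $P_k^{\mathcal{N}}\in\mathcal{N}$, the bimodule property \eqref{ce_def_prop} gives $\varepsilon(P_k^{\mathcal{N}}AP_{k'}^{\mathcal{N}})=P_k^{\mathcal{N}}\varepsilon(A)P_{k'}^{\mathcal{N}}$, and the right-hand side vanishes for $k\neq k'$ because $\varepsilon(A)\in\mathcal{N}$ is block diagonal in the $P_k^{\mathcal{N}}$. Expanding $A=\sum_{k,k'}P_k^{\mathcal{N}}AP_{k'}^{\mathcal{N}}$ by linearity then yields $\varepsilon(A)=\bigoplus_k \varepsilon_k(A_k)$ with $\varepsilon_k:=\varepsilon|_{\mathcal{M}_k}$ a conditional expectation onto the factor $\mathcal{N}_k:=P_k^{\mathcal{N}}\mathcal{N}P_k^{\mathcal{N}}\simeq M_{n_k}(\mathbb{C})$.

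Second, I would identify the tensor-product decomposition $\mathcal{M}_k \simeq \mathcal{N}_k\otimes(\mathcal{M}_k\cap\mathcal{N}')$, with $\mathcal{M}_k\cap\mathcal{N}'\simeq\bigoplus_j M_{\mu_{kj}}(\mathbb{C})$. The inclusion multiplicities $\mu_{kj}$ provide isomorphisms $\mathbb{C}^{m_j}\simeq\bigoplus_k\mathbb{C}^{n_k}\otimes\mathbb{C}^{\mu_{kj}}$ on which $\mathcal{N}$ acts on the first tensor factor, while the passive multiplicity $\mathds{1}_{m'_j}$ of $\mathcal{M}$ remains inert under the inclusion. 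Restricting to the range of $P_k^{\mathcal{N}}$ then identifies $\mathcal{N}_k$ as $M_{n_k}(\mathbb{C})$ acting on the $\mathbb{C}^{n_k}$ factor and the relative commutant as $\bigoplus_j M_{\mu_{kj}}(\mathbb{C})$ acting on the $\mathbb{C}^{\mu_{kj}}$ factors. This reproduces exactly the intermediate algebra displayed in the excerpt.

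Third, in the factored form, the bimodule property applied to $B\in\mathcal{N}_k$ and $1\otimes C\in\mathcal{M}_k\cap\mathcal{N}'$ gives $B\,\varepsilon_k(1\otimes C)=\varepsilon_k(B\otimes C)=\varepsilon_k(1\otimes C)\,B$, since $B\otimes 1$ and $1\otimes C$ commute in $\mathcal{M}_k$. Therefore $\varepsilon_k(1\otimes C)$ commutes with all of $M_{n_k}(\mathbb{C})$, a factor, and must be a scalar multiple of the identity, which I call $\rho_k^{\varepsilon}(C)\mathbf{1}_{n_k}$. Positivity and unitality of $\varepsilon_k$ promote $\rho_k^{\varepsilon}$ to a state on $\bigoplus_j M_{\mu_{kj}}(\mathbb{C})$, whose canonical central decomposition reads $\bigoplus_j p_{kj}^{\varepsilon}\rho_{kj}^{\varepsilon}$ with $\sum_j p_{kj}^{\varepsilon}=1$. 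Extending by linearity and the bimodule property produces $\varepsilon_k(B\otimes C)=B\,\rho_k^{\varepsilon}(C)$, which after restoring the trivial $\mathds{1}_{n'_k}$ factor coincides with \eqref{par_ce}. Conversely, any such state defines a valid conditional expectation, establishing the bijection.

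The main obstacle is the second step: making the factorization $\mathcal{M}_k \simeq \mathcal{N}_k\otimes(\mathcal{M}_k\cap\mathcal{N}')$ rigorous requires carefully disentangling the genuine multiplicity spaces $\mathbb{C}^{\mu_{kj}}$ of the inclusion, which become the relative commutant, from the passive multiplicity spaces $\mathds{1}_{m'_j}$ of $\mathcal{M}$ itself, which are invisible to both $\mathcal{N}$ and $\mathcal{M}_k\cap\mathcal{N}'$. Once this Wedderburn-type structure is in place, the centrality argument in the factor case reduces the remainder to essentially automatic bookkeeping.
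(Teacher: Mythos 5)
Your proposal is correct and follows essentially the same route as the paper: the paper's Lemma \ref{lemma_ce1} is exactly your first reduction along the minimal central projectors of $\mathcal{Z}(\mathcal{N})$, and its Lemma \ref{lemma_ce2} is your factor-case argument, where the bimodule property forces $\varepsilon_k(\mathbf{1}\otimes C)$ into $\mathcal{N}_k\cap\mathcal{N}_k'=\mathbb{C}\mathbf{1}$ and thereby defines a state on the relative commutant. The only difference is one of emphasis: the paper simply asserts the Wedderburn factorization $\mathcal{M}_k\cong\mathcal{N}_k\otimes(\mathcal{M}_k\cap\mathcal{N}')$ that you flag as the delicate step, and it spells out the converse direction (that every state yields a positive, unital, bimodule map) in more detail than your one-sentence remark.
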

\begin{proof}
See appendix A.
\end{proof}
Therefore, the conditional expectations $\varepsilon$ and $\varepsilon '$ are parametrized by  states $\rho_{k}^{\varepsilon}$ on $\mathcal{M}_{k}\cap\mathcal{N}'$, and $\tilde{\rho}_{j}^{\varepsilon}$ on $\mathcal{N}'_{j}\cap\mathcal{M}$ respectively, where $\mathcal{N}'_{j}:=P_{j}^{\mathcal{M}}\mathcal{N'}P_{j}^{\mathcal{M}}$. 

Before we finish this section, let us make some definitions. Given $\varepsilon$ parametrized by $\rho_{k}^{\varepsilon}$, let $\{ t_{l,jk} \,:\,l=1,\ldots,\mu_{kj}\}$ be the eigenvalues of $\rho_{kj}^{\varepsilon}$. Then, we define
\be
\lambda_{jk}(\varepsilon)  :=  \sum_{l=1}^{\mu_{kj}}\frac{1}{t_{l,jk}}\quad \mathrm{and} \quad
\lambda_{j}(\varepsilon)  :=  \sum_{k=1}^{z_{\mathcal{N}}}\frac{\lambda(\varepsilon_{jk})}{p_{kj}^{\varepsilon}}\,.\label{lf} 
\ee
Besides, we define the space $\hat{C}\left(\mathcal{M},\mathcal{N}\right)\subset C\left(\mathcal{M},\mathcal{N}\right)$ to be the one formed by those conditional expectations for which $\lambda_{j}(\varepsilon):=\lambda(\varepsilon)$ is independent of $j$.

\section{Entropic certainties and the algebraic index.}

The following theorem, which is the main result of the present letter, holds for ``connected inclusions" $\mathcal{N}\subset\mathcal{M}$, which satisfy the extra property $\mathcal{Z}(\mathcal{M})\cap\mathcal{Z}(\mathcal{N})=\mathcal{\mathds{1}}$ \cite{teruya}. Non-connected inclusions are direct sums of connected ones, and \eqref{cerp} holds for each term of the sum independently. We consider them in appendix B.

\begin{thm} \label{cer_thm}
Let $\mathcal{N}\subset\mathcal{M}\subset\mathcal{B}\left(\mathcal{H}\right)$
be a connected inclusion of finite dimensional algebras. Then, for every $\varepsilon\in\hat{C}\left(\mathcal{M},\mathcal{N}\right)$
there exists a unique $\varepsilon'\in\hat{C}\left(\mathcal{M},\mathcal{N}\right)$ such that
\begin{equation}
S_{\mathcal{M}}\left(\omega|\omega\circ\varepsilon\right)+S_{\mathcal{N}'}\left(\omega|\omega\circ\varepsilon'\right)=\log\left(\lambda\right)\,,\label{cer_eq}
\end{equation}
holds for any global pure state $\omega$ on $\mathcal{B}\left(\mathcal{H}\right)$, and where $\lambda:=\lambda(\varepsilon)\equiv\lambda(\varepsilon')$ is the algebraic index of the conditional expectations, to be described below. Moreover, the conditional expectation $\varepsilon_{0}$ that minimizes $\lambda$ always exists, it is unique, and $\varepsilon_{0}\in \hat{C}\left(\mathcal{M},\mathcal{N}\right)$. The \textit{minimal index} $\lambda(\varepsilon_0) =: [\mathcal{M}:\mathcal{N}]$ only depends on the inclusion $\mathcal{N}\subset\mathcal{M}$.
\end{thm}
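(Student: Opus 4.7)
The plan is to compute both relative entropies in terms of the parameters provided by Lemma~\ref{lemma_cem}, then exploit the purity of $\omega$ to cancel every state-dependent piece, leaving a single state-independent logarithm that equals $\log\lambda$ provided $\varepsilon'$ is chosen by inverting $\varepsilon$ in an appropriate algebraic sense.

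First I would package the data of Lemma~\ref{lemma_cem} into a single ``spatial-derivative'' element. Both $\mathcal{M}_{k}\cap\mathcal{N}'$ and $\mathcal{N}'_{j}\cap\mathcal{M}$ sit inside the relative commutant $\mathcal{M}\cap\mathcal{N}'\cong\bigoplus_{k,j}M_{\mu_{kj}}(\mathbb{C})$, so $\varepsilon$ and $\varepsilon'$ each determine positive invertible elements $d^{\varepsilon},d^{\varepsilon'}\in\mathcal{M}\cap\mathcal{N}'$ built block-by-block from the eigenvalues $t_{l,jk}$ and $\tilde{t}_{l,jk}$ and the weights $p_{kj}^{\varepsilon}$ and $\tilde{p}_{jk}^{\varepsilon'}$. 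Since $d^{\varepsilon}$ lies in $\mathcal{N}'$ it commutes with $\rho^{\omega}|_{\mathcal{N}}\in\mathcal{N}$, and a direct computation with the block structure yields the factorization
\begin{equation*}
\log\rho^{\omega\circ\varepsilon}=\log\rho^{\omega}|_{\mathcal{N}}+\log d^{\varepsilon},
\end{equation*}
and therefore the decomposition
\begin{equation*}
S_{\mathcal{M}}(\omega|\omega\circ\varepsilon)=S_{\mathcal{N}}(\omega)-S_{\mathcal{M}}(\omega)-\omega(\log d^{\varepsilon}),
\end{equation*}
with the symmetric identity $S_{\mathcal{N}'}(\omega|\omega\circ\varepsilon')=S_{\mathcal{M}'}(\omega)-S_{\mathcal{N}'}(\omega)-\omega(\log d^{\varepsilon'})$ for the dual pair.

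Next I would add the two identities and invoke the pure-state Schmidt duality $S_{\mathcal{M}}(\omega)=S_{\mathcal{M}'}(\omega)$ and $S_{\mathcal{N}}(\omega)=S_{\mathcal{N}'}(\omega)$. All four von~Neumann entropy terms cancel pairwise and one is left with
\begin{equation*}
S_{\mathcal{M}}(\omega|\omega\circ\varepsilon)+S_{\mathcal{N}'}(\omega|\omega\circ\varepsilon')=-\omega\bigl(\log(d^{\varepsilon}d^{\varepsilon'})\bigr),
\end{equation*}
which is state-independent precisely when $d^{\varepsilon}d^{\varepsilon'}=\lambda^{-1}\mathbf{1}$ for a single positive scalar $\lambda$. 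Unpacking this condition block-by-block in terms of the Lemma~\ref{lemma_cem} parameters recovers exactly the $\hat{C}$-constraint that $\lambda_{j}(\varepsilon)=\lambda_{j}(\varepsilon')$ be independent of $j$ with common value $\lambda$, and uniquely fixes $\varepsilon'\in\hat{C}(\mathcal{N}',\mathcal{M}')$ as the block-wise algebraic inverse of $\varepsilon$. This establishes both the existence/uniqueness of the dual and the entropic certainty~\eqref{cer_eq}.

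For the minimization, Jensen's inequality applied to the strictly convex function $t\mapsto 1/t$ forces, at fixed weights, the unique optimum $\rho_{kj}^{\varepsilon}=\mathbf{1}_{\mu_{kj}}/\mu_{kj}$ and gives $\lambda_{jk}(\varepsilon)\geq\mu_{kj}^{2}$ with equality there. The residual minimization of $\lambda_{j}(\varepsilon)=\sum_{k}\mu_{kj}^{2}/p_{kj}^{\varepsilon}$ over probability vectors $\{p_{kj}^{\varepsilon}\}_{k}$ subject to $j$-independence is a Perron--Frobenius-type eigenvalue problem for the inclusion matrix $(\mu_{kj}^{2})$; the connectedness hypothesis $\mathcal{Z}(\mathcal{M})\cap\mathcal{Z}(\mathcal{N})=\mathds{1}$ is exactly what guarantees irreducibility and thus a unique positive solution depending only on $(\mu_{kj})$. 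A short convexity argument rules out candidates in $C(\mathcal{M},\mathcal{N})\setminus\hat{C}(\mathcal{M},\mathcal{N})$, so $\varepsilon_{0}\in\hat{C}(\mathcal{M},\mathcal{N})$ and the value $\lambda(\varepsilon_{0})=:[\mathcal{M}:\mathcal{N}]$ is intrinsic to the inclusion. The main obstacle is the opening algebraic step: correctly identifying $d^{\varepsilon}$, verifying the commutation property that makes the logarithm split additively, and recognizing that the scalar condition $d^{\varepsilon}d^{\varepsilon'}\propto\mathbf{1}$ is \emph{exactly} the $\hat{C}$ condition on both sides; once this structural input from Lemma~\ref{lemma_cem} is in place, the purity-induced cancellation and the Perron--Frobenius minimization are routine.
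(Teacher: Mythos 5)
Your computation of the certainty identity itself is essentially the paper's own argument in different clothing: the ``spatial-derivative'' element $d^{\varepsilon}$ is exactly the block data $\bigoplus_{k}\mathbf{1}_{n_k}\otimes\rho_k^{\varepsilon}$ of Lemma~\ref{lemma_cem}, the factorization $\rho^{\omega\circ\varepsilon}=\rho^{\omega}|_{\mathcal{N}}\,d^{\varepsilon}$ is the paper's $\rho^{\omega\circ\varepsilon}=\bigoplus_k\rho_k^{\omega_{\mathcal{N}}}\otimes\rho_k^{\varepsilon}$, and the purity-induced cancellation plus the block-by-block constraint $\log d^{\varepsilon}+\log d^{\varepsilon'}=-\log\lambda\cdot\mathbf{1}$ is precisely the paper's route to its condition on $p_{kj}^{\varepsilon}\rho_{kj}^{\varepsilon}$ and $\tilde p_{jk}^{\varepsilon'}\tilde\rho_{jk}^{\varepsilon'}$, with the normalization $\sum_k\tilde p_{jk}^{\varepsilon'}=1$ forcing the $\hat C$ condition. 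Two small cautions there: write the state-independence condition as $\log d^{\varepsilon}+\log d^{\varepsilon'}\propto\mathbf{1}$ rather than $\log(d^{\varepsilon}d^{\varepsilon'})$, since a priori the two blocks need not commute (commutativity, i.e.\ simultaneous diagonalizability of $\rho_{kj}^{\varepsilon}$ and $\tilde\rho_{jk}^{\varepsilon'}$, is a consequence of the condition, not an assumption). Where you genuinely diverge is the minimal-index claim: the paper does not prove it but imports it from Teruya and Longo (Appendix C), whereas you sketch a self-contained finite-dimensional argument via Jensen (giving $\lambda_{jk}\geq\mu_{kj}^2$ at the tracial block states) followed by an eigenvalue problem for the weights $p_{kj}$. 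That route does work and is arguably preferable in this setting, but the relevant matrix is not the entrywise square $(\mu_{kj}^2)$: writing $p_{kj}=\mu_{kj}x_j y_k$ with $\sum_j p_{kj}=1$ turns the $j$-independence condition into $\Lambda^{T}\Lambda x=\lambda x$ for the inclusion matrix $\Lambda=(\mu_{kj})$, so the minimal index is the Perron--Frobenius eigenvalue $\|\Lambda\|^2$ of $\Lambda^{T}\Lambda$, with connectedness supplying irreducibility and hence uniqueness; and the reduction from $C(\mathcal{M},\mathcal{N})$ to $\hat C(\mathcal{M},\mathcal{N})$ (that rebalancing the weights never increases the norm index) deserves more than the one-line appeal to convexity you give it.
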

\begin{proof}
See appendix B.
\end{proof}

As stated, $\lambda \ge 1$, defined in \eqref{lf}, is the algebraic index of the conditional expectations. This index has been extensively studied in the mathematical and physics communities. The first notion was proposed by Jones in the context of inclusions of type $II_{1}$ subfactors \cite{Jones1983}. It was later noticed independently by Kosaki and Longo \cite{KOSAKI1986123,longo1989}, that the index was most naturally associated with a conditional expectation, and both were able to extend the definition to type $III$ algebras. We review the definition and several examples in appendix C.

Probably, the most notable application of index theory to quantum physics concerns a discovery by Longo \cite{longo1989} in the context of the algebraic approach to superselection sectors in QFT, developed by Doplicher, Haag and Roberts \cite{Doplicher:1969tk,Doplicher:1969kp,Doplicher:1973at,Haag:1992hx}. He found that the dimension $d_{r}$ of a superselection sector, associated to a representation $r$ of $G$ and characterized by an endomorphism $\rho_{r}$ of the observable algebra $\mathcal{O}$ is related to the minimal index of the inclusion $\rho_{r} (\mathcal{O})\subset\mathcal{O}$ by means of
\be
[\mathcal{O}:\rho_{r} (\mathcal{O})] =d_r^{2} \, .
\ee
Also, considering the field algebra $\mathcal{F}$, including all charged operators, and the observable (neutral respect to $G$) subalgebra $\mathcal{O}\subset \mathcal{F}$, one obtains \cite{Longo:1994xe} 
\be
[\mathcal{O}:\rho(\mathcal{O})]^\frac{1}{2}= [\mathcal{F}:\mathcal{O}] =\sum_{r}d_{r}^{2}\,,
\ee
where $\rho\simeq\bigoplus d_{r}\rho_{r}$, $r$ runs over the irreducible representations of $G$, and $d_{r}$ are their dimensions.

There is also an old application of index theory to information theory \cite{popa}. It works as follows. In the context of inclusion of subfactors, the index can be defined by the so-called Pimsner-Popa bound \cite{popa}
\be
\varepsilon (m^{+})\geq \lambda^{-1}\,m^{+}\,,\hspace{4mm} \forall m^{+}\in\mathcal{M}_{+} \,. \label{popa_eq}
\ee
In the original references \cite{KOSAKI1986123,longo1989}, Kosaki and Longo showed their index definitions imply the bound. This bound can be used to constrain relative entropies. We first notice that for two normal states $\omega$ and $\omega'$ on $\mathcal{M}$ satisfying $\omega\geq \mu \omega'$, we have that $S_{\mathcal{M}}\left(\omega|\omega'\right)\leq \log \mu^{-1}$ \cite{ohya1993quantum}. Therefore,
\be
S_{\mathcal{M}}\left(\omega|\omega\circ\varepsilon\right)\leq \log \lambda\,. \label{re_popa}
\ee
Further similar applications have appeared recently in \cite{Longo:2017mbg} and \cite{Naaijkens_2018}. Relation \eqref{re_popa} is explicit in the entropic certainty relation \eqref{cer_eq}, since $S_{\mathcal{N}'}\left(\omega|\omega\circ\varepsilon'\right)\geq 0$. But \eqref{cer_eq} further improves such bounds by unraveling the cause for the depart from saturation. 

\section{Applications.} 

We begin with two physical examples which have been discussed extensively in \cite{Casini:2019kex,Casini:ls}. First, we consider QFTs with global symmetries. These are both those theories for which $G$ acts equally in all space, and those for which the charged operators are local operators \cite{Haag:1992hx}. The existence of the unitary representation $U_{g}$ of $G$ suggests the existence of local representations $\tau_{g}^{A}$, where $A$ is the space domain of support of the operator, satisfying the group algebra
\be
\tau_{g}^{A}\tau_{g'}^{A}=\tau_{gg'}^{A}\,.
\ee
The construction of $\tau_{g}^{A}$ in the lattice is trivial. In the continuum is more subtle, but it can be done  \cite{doplicher1983,Doplicher:1984zz,Bueno:2020vnx}. The operators $\tau_{g}^{A}$ are called \textit{twists}.
If $G$ is non-abelian, $\tau_{g}^{A}$ are not invariant under $G$. We can construct invariant combinations by averaging over the group, resulting in one invariant twist $\tau_{[g]}^{A}$ per equivalence class of $G$ \cite{Casini:2019kex,Casini:ls}.

On the other hand, the local charged operators $V_{r,i}^{A}$, localized in region $A$ and transforming according to the irreducible representation $r$ as $\tau_{g}^{A}V_{r,i}^{A}(\tau_{g}^{A})^{-1}=\sum_{j=1}^{d_r} \mathcal{R}_{ij}^r(g)V_{r,j}^{A}$, suggest the construction of non-local neutral operators, called \textit{intertwiners}, formed by contracting two charged operators localized on different regions
\be
\mathcal{I}_{r}^{AB}:=\sum_{i=1}^{d_{r}}V_{r,i}^{A}\,V_{r,i}^{B\,\dagger}\,.
\ee
Crucially $[\mathcal{I}_{r},\tau_{[g]}^{A}]\neq 0$, since $\tau_{[g]}^{A}$ acts as a group transformation on $A$, but as the identity on $B$. Then, the meaningful COA is formed by the twists $\tau^{A}$ and the intertwiners $\mathcal{I}^{AB}$. Consider now two disjoint regions $A$ and $B$, with neutral (observable) algebras $\mathcal{O}_{A}$ and $\mathcal{O}_{B}$, and its complementary region $(AB)'$ with algebra $\mathcal{O}_{(AB)'}$. Defining $\mathcal{O}_{AB}\equiv\mathcal{O}_{A}\vee\mathcal{O}_{B}$ the COA diagram \eqref{ecr_diagr} reads
\bea
\mathcal{O}_{AB}\vee \mathcal{I}_{AB} & \overset{\varepsilon}{\longrightarrow} & \mathcal{O}_{AB}\nonumber \\
\updownarrow\prime\! &  & \:\updownarrow\prime\\
\mathcal{O}_{(AB)'} & \overset{\varepsilon'}{\longleftarrow} & \mathcal{O}_{(AB)'}\vee \tau^{A}\,,\nonumber 
\eea
where $\varepsilon$ kills the intertwiners and $\varepsilon '$ kills the twists. The minimal index appearing in the entropic certainty becomes
\be
\lambda_{\mathrm{min}}=\sum_{r}d_{r}^{2}=\left|G\right|\,,
\ee
which coincides with the expression for the topological entanglement entropy \cite{PhysRevLett.96.110404,PhysRevLett.96.110405}. This coincidence is explained in \cite{Casini:ls}. Our main contribution here, expanding the results found in \cite{Casini:2019kex}, is that this new proof uses only the neutral algebra.

The last feature is crucial when considering theories with local symmetries. In this scenario and concentrating in four spacetime dimensions, we have Wilson and 't Hooft loops, located on rings, associated with the center of the group \cite{Casini:ls}. It is well-known that these algebras do not commute. An interesting COA arises by the algebra of Wilson loops $W^{R}$ in a ring $R$ and the algebra of 't Hooft loops $T^{R'}$ in the complementary ring $R'$. The COA diagram reads 
\begin{eqnarray}
\mathcal{O}_{R}\vee W_{R} & \overset{\varepsilon}{\longrightarrow} & \mathcal{O}_{R}\nonumber \\
\updownarrow\prime\! &  & \:\updownarrow\prime\\
\mathcal{O}_{R'} \vee W_{R'} & \overset{\varepsilon'}{\longleftarrow} & \mathcal{O}_{R'}\vee W_{R'}\vee T_{R'}\,.\nonumber 
\end{eqnarray}
The entropic certainty follows with an index equal to the dimension of the center \cite{Casini:ls}.

In the first case, the intertwiners are natural order parameters, while the twists can be thought of as disorder operators. In the second case, the Wilson loop is the order parameter, while the 't Hooft loop is the disorder one. The entropic certainty thus captures, in a quantitative manner, the interplay between order/disorder parameters, whose commutation relations are crucial for the characterization of quantum phases.

The previous observation suggests that the complementarity between order and disorder parameters can be always framed by such entropic certainty principles. New uncertainty relations can be derived by using monotonicity of the relative entropy under general quantum channels or algebra restrictions. In the general case,
\bea
&&S_{\mathcal{M}}\left(\mathcal{E}(\omega)\mid\mathcal{E}(\omega\circ\varepsilon)\right)+S_{\mathcal{N}'}\left(\mathcal{E}'(\omega)\mid\mathcal{E}'(\omega\circ\varepsilon')\right) \leq\log\lambda \, , \nn \\
&&S_{\mathcal{\tilde{M}}}\left(\omega\mid\omega\circ\varepsilon\right)+S_{\mathcal{\tilde{N}}'}\left(\omega\mid\omega\circ\varepsilon'\right)\leq\log\lambda \, ,
\eea
where $\mathcal{E}$ and $\mathcal{E}'$ are any two quantum channels, and $\tilde{\mathcal{M}}\subset \mathcal{M}$ and $\tilde{\mathcal{N}}'\subset\mathcal{N}'$. At the present time, we do not know what is the class of uncertainty relations that can be proven in this way. They certainly expand the uncertainty relations proved in \cite{PhysRevLett.108.210405} with monotonicity of relative entropy in unexplored directions. It would be interesting if all uncertainty relations could be derived from such entropic certainties. We leave this as an open problem for the near future. Further uncertainties arise in the case where the global state $\omega$ considered is not pure. These are considered in appendix B.

\section{Discussion and future prospects.} 

In this letter, we have explored new implications of quantum complementarity. Most importantly we have found a formulation in which, instead of uncertainty principles, one naturally finds entropic certainty principles. Our main result is theorem \ref{cer_thm}. From this generic result, one obtains entropic uncertainty principles by using the monotonicity of the relative entropy. From a more physical perspective, our principle captures the interplay between order and disorder parameters in quantum theories. This has been described for QFTs with global and local symmetries.

Several open problems/questions are left for the future. The first concerns the consideration of more general quantum channels, instead of conditional expectations. We wonder if \eqref{cerp} holds for some a suitably defined $\lambda$ associated with the quantum channel. This might provide a path to extend the notion of index to more generic quantum channels. On the other hand, the validity of \eqref{cerp} only for conditional expectations could suggest they play a distinguished role in the description of quantum complementarity. The second problem concerns the extension of the proof to general infinite dimensional von Neumann algebras. Finally, a deeper understanding of how the entropic certainty discerns and characterize phases of quantum matter is needed. In particular, it would be interesting how \eqref{cerp} responses under dualities. Duality transformations typically map order parameters to disorder parameters, and we expect an interesting interplay when combining this feature with the entropic certainty.

\section{Acknowledgements.} We thank H. Casini and M. Huerta for guidance and many discussions. The work of J. M. is supported by the Simons Foundation through the ``It from Qubit'' collaboration. The work of D. P. is funded by CONICET, Argentina.

\bibliography{Bibliography}


\newpage
\section{APPENDIX}
\appendix
\section{A. The space of conditional expectations}\label{appx:rel}

Let $\mathcal{N}\subset\mathcal{M}\subset\mathcal{B}\left(\mathcal{H}\right)$
be an inclusion of algebras. A linear map $\varepsilon:\mathcal{M}\rightarrow\mathcal{N}$
is called a conditional expectation if it is positive, unital, and satisfies the bimodule property
\begin{equation}
\varepsilon\left(B_{1}\,A\,B_{2}\right)=B_{1}\varepsilon\left(A\right)B_{2}\,,\quad\forall A\in\mathcal{M}\textrm{ and }\forall B_{1},B_{2}\in\mathcal{N}\,.\label{ce_def_proprep}
\end{equation}
Conditional expectations are completely positive maps, and hence, special cases of quantum channels.

We want to characterize the space of all conditional expectations, denoted by $C\left(\mathcal{M},\mathcal{N}\right)$, for finite dimensional algebras. To such an end, we will use two building blocks giving by the following lemmas \ref{lemma_ce1} and \ref{lemma_ce2}.

\begin{lem} \label{lemma_ce1}
Let $\mathcal{N}\subset\mathcal{M}\subset\mathcal{B}\left(\mathcal{H}\right)$
be an inclusion of algebras, $\left\{ P_{1}^{\mathcal{N}},\ldots,P_{z_{\mathcal{N}}}^{\mathcal{N}}\right\} $ the minimal projectors of $\mathcal{Z}(\mathcal{N})$, and $\varepsilon\in C\left(\mathcal{M},\mathcal{N}\right)$. Let us define the algebras
\be
\mathcal{M}_{k}  :=  P_{k}^{\mathcal{N}}\mathcal{M}P_{k}^{\mathcal{N}}\,, \hspace{4mm} \mathcal{N}_{k} := P_{k}^{\mathcal{N}}\mathcal{N}P_{k}^{\mathcal{N}}\,.
\ee
Then, there exist unique conditional expectations $\varepsilon_{k}\in C\left(\mathcal{M}_{k},\mathcal{N}_{k}\right)$
such that
\be
\varepsilon(A)=\bigoplus_{k=1}^{z_{\mathcal{N}}}\varepsilon_{k}\left(P_{k}^{\mathcal{N}}AP_{k}^{\mathcal{N}}\right)\,.\label{ce_c1}
\ee
Conversely, given any set of conditional expectations
$\varepsilon_{k}\in C\left(\mathcal{M}_{k},\mathcal{N}_{k}\right)$,
the above formula defines a conditional expectation $\varepsilon\in C\left(\mathcal{M},\mathcal{N}\right)$.
\end{lem}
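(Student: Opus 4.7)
The key observation is that the minimal central projectors $P_k^{\mathcal{N}}$ of $\mathcal{Z}(\mathcal{N})$ belong to $\mathcal{N}$ and are central there; this fact together with the bimodule property \eqref{ce_def_proprep} does essentially all the work. The plan is to first use centrality to show that $\varepsilon$ is block-diagonal with respect to the $P_k^{\mathcal{N}}$, and then restrict it to the diagonal blocks.

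First, I would write any $A\in\mathcal{M}$ as $A=\sum_{k,l}P_k^{\mathcal{N}}AP_l^{\mathcal{N}}$, using the resolution of the identity $\mathbf{1}=\sum_k P_k^{\mathcal{N}}$. Applying $\varepsilon$ and using the bimodule property (with $P_k^{\mathcal{N}},P_l^{\mathcal{N}}\in\mathcal{N}$) gives $\varepsilon\!\left(P_k^{\mathcal{N}}AP_l^{\mathcal{N}}\right)=P_k^{\mathcal{N}}\varepsilon(A)P_l^{\mathcal{N}}$. Since $\varepsilon(A)\in\mathcal{N}$ and $\mathcal{N}=\bigoplus_k \mathcal{N}_k$ decomposes along the central projectors, $P_k^{\mathcal{N}}\varepsilon(A)P_l^{\mathcal{N}}=0$ for $k\neq l$. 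Therefore
\begin{equation}
\varepsilon(A)=\sum_{k=1}^{z_{\mathcal{N}}}\varepsilon\!\left(P_k^{\mathcal{N}}AP_k^{\mathcal{N}}\right),
\end{equation}
with each term sitting in $\mathcal{N}_k$; this is exactly the direct-sum form \eqref{ce_c1} provided I define $\varepsilon_k:\mathcal{M}_k\to\mathcal{N}_k$ as the restriction $\varepsilon_k(A_k):=\varepsilon(A_k)$.

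Next I would verify that each $\varepsilon_k$ is itself a conditional expectation in $C(\mathcal{M}_k,\mathcal{N}_k)$. Linearity and positivity are inherited directly. Unitality follows from $\varepsilon_k(P_k^{\mathcal{N}})=\varepsilon(P_k^{\mathcal{N}})=P_k^{\mathcal{N}}$, where the second equality holds because $P_k^{\mathcal{N}}\in\mathcal{N}$ and $P_k^{\mathcal{N}}$ is the unit of both $\mathcal{M}_k$ and $\mathcal{N}_k$. The bimodule property for $\varepsilon_k$ is inherited from that of $\varepsilon$, since any $B_1,B_2\in\mathcal{N}_k$ in particular lie in $\mathcal{N}$. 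Uniqueness of the $\varepsilon_k$ is immediate: if $\tilde{\varepsilon}_k$ is another choice, then evaluating both decompositions at $A_k\in\mathcal{M}_k$ (so that $P_l^{\mathcal{N}}A_kP_l^{\mathcal{N}}=\delta_{lk}A_k$) forces $\tilde{\varepsilon}_k(A_k)=\varepsilon(A_k)=\varepsilon_k(A_k)$.

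Finally, for the converse, given $\varepsilon_k\in C(\mathcal{M}_k,\mathcal{N}_k)$, I would define $\varepsilon(A):=\sum_k \varepsilon_k(P_k^{\mathcal{N}}AP_k^{\mathcal{N}})$ and check the four conditional-expectation axioms. Linearity, positivity, and unitality ($\varepsilon(\mathbf{1})=\sum_k P_k^{\mathcal{N}}=\mathbf{1}$) are immediate. The only slightly delicate point---which I expect to be the main (mild) obstacle---is the bimodule property. For $B\in\mathcal{N}$, centrality of $P_k^{\mathcal{N}}$ in $\mathcal{N}$ gives $P_k^{\mathcal{N}}BP_k^{\mathcal{N}}=BP_k^{\mathcal{N}}=P_k^{\mathcal{N}}B$, so that $P_k^{\mathcal{N}}BAP_k^{\mathcal{N}}=B_k A_k$ with $B_k:=P_k^{\mathcal{N}}BP_k^{\mathcal{N}}\in\mathcal{N}_k$ and $A_k:=P_k^{\mathcal{N}}AP_k^{\mathcal{N}}\in\mathcal{M}_k$. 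Applying the bimodule property of $\varepsilon_k$ inside each block and then using $\varepsilon_k(A_k)\in\mathcal{N}_k$ together with centrality again to recombine the $P_k^{\mathcal{N}}$ factors yields $\varepsilon(BA)=B\,\varepsilon(A)$, and symmetrically $\varepsilon(AB)=\varepsilon(A)B$. This completes the characterization.
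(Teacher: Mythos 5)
Your proposal is correct and follows essentially the same route as the paper's proof: decompose $A=\sum_{k,k'}P_{k}^{\mathcal{N}}AP_{k'}^{\mathcal{N}}$, use the bimodule property together with centrality of the $P_{k}^{\mathcal{N}}$ in $\mathcal{N}$ to kill the off-diagonal blocks, and define $\varepsilon_{k}$ as the restriction to the $k$-th block. The only difference is that you spell out the verifications (unitality, positivity, the bimodule property of each $\varepsilon_k$, uniqueness, and the converse) that the paper dismisses as ``straightforward,'' and these are all carried out correctly.
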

\begin{proof}
Let $\varepsilon\in C\left(\mathcal{M},\mathcal{N}\right)$
and $A\in\mathcal{M}$. Then, using the bimodule property of the conditional expectation we obtain
\begin{equation}
\varepsilon(A) = \varepsilon\left(\sum_{k,k'=1}^{z_{\mathcal{N}}}P_{k}^{\mathcal{N}}AP_{k'}^{\mathcal{N}}\right)= \sum_{k=1}^{z_{\mathcal{N}}}P_{k}^{\mathcal{N}}\varepsilon\left(P_{k}^{\mathcal{N}}AP_{k}^{\mathcal{N}}\right)P_{k}^{\mathcal{N}}\,.
\end{equation}
The last term naturally defines $\varepsilon_{k}:\mathcal{M}_{k}\rightarrow\mathcal{N}_{k}$ by means of
\begin{equation}
\varepsilon_{k}\left(P_{k}^{\mathcal{N}}AP_{k}^{\mathcal{N}}\right):=P_{k}^{\mathcal{N}}\varepsilon\left(P_{k}^{\mathcal{N}}AP_{k}^{\mathcal{N}}\right)P_{k}^{\mathcal{N}}\,.
\end{equation}
Using that $\varepsilon$ is a conditional expectation, a straightforward computation shows that $\varepsilon_{k}\in C\left(\mathcal{M}_{k},\mathcal{N}_{k}\right)$ for all $k=1,\ldots,z_{\mathcal{N}}$.

Conversely, given conditional expectations $\varepsilon_{k}\in C\left(\mathcal{M}_{k},\mathcal{N}_{k}\right)$ ($k=1,\ldots,z_{\mathcal{N}}$), it is easy to show
that $\varepsilon:\mathcal{M}\rightarrow\mathcal{N}$ defined as \eqref{ce_c1} is a conditional expectation in $ C\left(\mathcal{M},\mathcal{N}\right)$.
\end{proof}

Now, we need to consider the case when $\mathcal{N}$ is a factor.
In this case, we must have that
\be
\mathcal{N}  \cong  M_{n}\left(\mathbb{C}\right)\otimes\mathds{1}_{n'}\,,\hspace{4mm}
\mathcal{M} \cong M_{n}\left(\mathbb{C}\right)\otimes\mathcal{A}\,,\label{n_factor}
\ee
where $\mathcal{A}\subset M_{n'}(\mathbb{C})$ is some finite dimensional
subalgebra. In the most general case, we have that
\be
\mathcal{A}\cong\bigoplus_{j=1}^{z_{\mathcal{A}}}M_{a_{j}}\left(\mathbb{C}\right)\otimes\mathds{1}_{a'_{j}}\,,
\ee
and hence
\be
\mathcal{M}\cong\bigoplus_{j=1}^{z_{\mathcal{A}}}M_{n}\left(\mathbb{C}\right)\otimes M_{a_{j}}\left(\mathbb{C}\right)\otimes\mathds{1}_{a'_{j}}\cong\bigoplus_{j=1}^{z_{\mathcal{A}}}M_{n\times a_{j}}\left(\mathbb{C}\right)\otimes\mathds{1}_{a'_{j}}\,.\label{m_n_factor}
\ee
In this scenario, we have the following lemma.
\begin{lem}\label{lemma_ce2}
Let $\mathcal{N}\subset\mathcal{M}\subset\mathcal{B}\left(\mathcal{H}\right)$
as in (\ref{n_factor}-\ref{m_n_factor}). Then, any conditional expectation
$\varepsilon\in C\left(\mathcal{M},\mathcal{N}\right)$ is of the
form
\begin{equation}
\varepsilon\left(B\otimes A\right)=\varphi_{\varepsilon}\left(A\right)\,\left(B\otimes\mathbf{1}_{n'}\right)\,, \,\,\,\,\forall B\in M_{n}(\mathbb{C})\,,\;\forall A\in\mathcal{A}\,,\label{cex_factor}
\end{equation}
where $\varphi_{\varepsilon}$ is a state on $\mathcal{A}$. Conversely,
any state $\varphi_{\varepsilon}$ on $\mathcal{A}$ defines a conditional
expectation by means of \eqref{cex_factor} for simple elements, and
it is extended by linearity for more general ones.
\end{lem}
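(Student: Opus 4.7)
The plan is to exploit two ingredients: the bimodule property satisfied by any conditional expectation, and the fact that $\mathcal{N}\cong M_n(\mathbb{C})\otimes\mathds{1}_{n'}$ is a factor, so that $\mathcal{Z}(\mathcal{N})=\mathbb{C}\cdot\mathbf{1}$. Together they force $\varepsilon$ to act as a scalar on the second tensor factor, i.e.\ through a state on $\mathcal{A}$.

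For the forward direction, the first step is to analyze $\varepsilon(\mathbf{1}_n\otimes A)$ for $A\in\mathcal{A}$. Since every $B\otimes\mathds{1}_{n'}\in\mathcal{N}$ commutes with $\mathbf{1}_n\otimes A$, the bimodule property gives
\begin{equation*}
(B\otimes\mathds{1}_{n'})\,\varepsilon(\mathbf{1}_n\otimes A)=\varepsilon(\mathbf{1}_n\otimes A)\,(B\otimes\mathds{1}_{n'}),\qquad \forall B\in M_n(\mathbb{C}),
\end{equation*}
so $\varepsilon(\mathbf{1}_n\otimes A)\in\mathcal{Z}(\mathcal{N})$ and must equal $\varphi_\varepsilon(A)\,(\mathbf{1}_n\otimes\mathds{1}_{n'})$ for a unique scalar $\varphi_\varepsilon(A)$. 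Writing $B\otimes A=(B\otimes\mathds{1}_{n'})(\mathbf{1}_n\otimes A)$ and applying the bimodule property a second time then yields \eqref{cex_factor} on every simple tensor, and hence on all of $\mathcal{M}$ by linearity. The linearity, positivity, and normalization of $\varphi_\varepsilon$ follow, respectively, from the linearity and positivity of $\varepsilon$ together with $\varepsilon(\mathbf{1})=\mathbf{1}$, so $\varphi_\varepsilon$ is indeed a state on $\mathcal{A}$.

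For the converse, given a state $\varphi$ on $\mathcal{A}$, define $\varepsilon$ on simple tensors by \eqref{cex_factor} and extend linearly. It is convenient to view this as $\varepsilon = \mathrm{id}_{M_n(\mathbb{C})}\otimes\,\varphi$, with the scalar output in the second slot identified with a multiple of $\mathds{1}_{n'}\in\mathcal{N}$. The bimodule identity and unitality are direct computations on simple tensors, using linearity of $\varphi$ and $\varphi(\mathbf{1}_{\mathcal{A}})=1$. Positivity is the only subtle point, since a general positive element of $\mathcal{M}$ is not a simple tensor; the cleanest resolution is to note that $\mathrm{id}_{M_n}$ and $\varphi$ are each completely positive, hence their tensor product is completely positive on $M_n(\mathbb{C})\otimes\mathcal{A}\cong\mathcal{M}$, and in particular positive. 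This is the step where the bookkeeping is most delicate, and I expect it to be the main obstacle of the proof.
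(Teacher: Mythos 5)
Your proposal is correct and follows essentially the same route as the paper's proof: the bimodule property together with $\mathcal{Z}(\mathcal{N})=\mathbb{C}\cdot\mathbf{1}$ forces $\varepsilon(\mathbf{1}_n\otimes A)$ to be scalar, and the converse handles positivity by writing $\varepsilon$ as (the inclusion composed with) $\mathrm{id}_n\otimes\varphi$ and invoking complete positivity of the state $\varphi$. No gaps to flag.
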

\begin{proof}
Let $\varepsilon\in C\left(\mathcal{M},\mathcal{N}\right)$, $A\in\mathcal{A}$
and $B\in M_{n}(\mathbb{C})$. Then, 
\bea
&\varepsilon\left(\mathbf{1}_{n}\otimes A\right)\cdot(B\otimes\mathbf{1}_{n'})=\varepsilon\left((\mathbf{1}_{n}\otimes A)\cdot(B\otimes\mathbf{1}_{n'})\right)=&\nonumber\\
&=\varepsilon\left((B\otimes\mathbf{1}_{n'})\cdot(\mathbf{1}_{n}\otimes A)\right)=(B\otimes\mathbf{1}_{n'})\cdot\varepsilon\left(\mathbf{1}_{n}\otimes A\right)&\nonumber\,,
\eea
which means that $\varepsilon\left(\mathbf{1}_{n}\otimes A\right)\in\mathcal{N}'$.
Since $\varepsilon\in C\left(\mathcal{M},\mathcal{N}\right)$, we
also have that $\varepsilon\left(\mathbf{1}_{n}\otimes A\right)\in\mathcal{N}$,
and hence $\varepsilon\left(\mathbf{1}_{n}\otimes A\right)\in\mathcal{Z}(\mathcal{N})=\mathds{1}_{n}\otimes\mathds{1}_{n'}$,
for all $A\in\mathcal{A}$. Then, there exists $\varphi_{\varepsilon}:\mathcal{A}\rightarrow\mathbb{C}$
such that
\be
\varepsilon\left(\mathbf{1}_{n}\otimes A\right)=\varphi_{\varepsilon}\left(A\right)\,\left(\mathbf{1}_{n}\otimes\mathbf{1}_{n'}\right)\,,\qquad\forall A\in\mathcal{A}\,,
\ee
and \eqref{cex_factor} automatically holds. To end, we have to show
that the map $\varphi_{\varepsilon}$ is a state on $\mathcal{A}$.
It is clear that $\varphi_{\varepsilon}$ is linear since $\varepsilon$
is linear. We also have that
\be
\varphi_{\varepsilon}\left(\mathbf{1}_{n'}\right)\,\left(\mathbf{1}_{n}\otimes\mathbf{1}_{n'}\right)=\varepsilon\left(\mathbf{1}_{n}\otimes\mathbf{1}_{n'}\right)=\mathbf{1}_{n}\otimes\mathbf{1}_{n'}\,,
\ee
and hence $\varphi_{\varepsilon}(\mathbf{1}_{n'})=1$. Finally, given $A\in\mathcal{A}$, we have
\bea
\varphi_{\varepsilon}\left(AA^{\dagger}\right)\,&&\left(\mathbf{1}_{n}\otimes\mathbf{1}_{n'}\right)=\varepsilon\left(\mathbf{1}_{n}\otimes AA^{\dagger}\right) \nonumber \\
&&=\varepsilon\left((\mathbf{1}_{n}\otimes A)\cdot(\mathbf{1}_{n}\otimes A)^{\dagger}\right)\geq0 \,,
\eea
which implies that $\varphi_{\varepsilon}\left(AA^{\dagger}\right)\geq0$
for all $A\in\mathcal{A}$.

Conversely, given a state $\varphi$ on $\mathcal{A}$, let us show
that \eqref{cex_factor} defines a conditional expectation $\varepsilon\in C\left(\mathcal{M},\mathcal{N}\right)$.
Given $A_{i}\in\mathcal{A}$, $B_{i}\in M_{n}(\mathbb{C})$, and $c_{i}\in\mathbb{C}$
($i=1,\ldots,N)$, then
\bea
\varphi \! \left(\sum_{i=1}^{N}c_{i}A_{i}\right)\!\left(B_{1}\otimes\mathbf{1}_{n'}\right) \!& = &\! \sum_{i=1}^{N}c_{i}\left[\varphi(A_{i})\!\left(B_{1}\otimes\mathbf{1}_{n'}\right)\right]\!, \hspace{6mm} \\
\varphi(A_{1})\!\left( \!\! \left(\sum_{i=1}^{N}c_{i}B_{i}\!\right)\!\otimes\!\mathbf{1}_{n'}\!\right) \!& = &\! \sum_{i=1}^{N}c_{i}\left[\varphi(A_{1})\!\left(B_{i}\otimes\mathbf{1}_{n'}\right)\right]\!,
\eea
which menas that the map $(B,A)\mapsto\varphi\left(A\right)\,\left(B\otimes\mathbf{1}_{n'}\right)$
is a bilinear map from the cartesian product $M_{n}(\mathbb{C})\times\mathcal{A}$
into $M_{n}(\mathbb{C})\otimes\mathds{1}_{n'}$, and hence defines
a unique linear map
\begin{equation}
\varepsilon\left(B\otimes A\right):=\varphi\left(A\right)\,\left(B\otimes\mathbf{1}_{n'}\right)
\end{equation}
from $M_{n}(\mathbb{C})\otimes\mathcal{A}$ into $M_{n}(\mathbb{C})\otimes\mathds{1}_{n'}$.
Given $B\in\mathcal{N}$ and $C\in\mathcal{M}$, then 
\bea
B&:=&\tilde{B}\otimes\mathbf{1}_{n'}\,, \hspace{7mm} \textrm{with }\tilde{B}\in M_{n}(\mathbb{C})\,,\\
C&:=&\sum_{i=1}^{N}\tilde{B}_{i}\otimes\tilde{A}_{i}\,, \hspace{2mm} \textrm{with }\tilde{B}_{i}\in M_{n}(\mathbb{C})\,,\;\tilde{A}_{i}\in\mathcal{A}\,.
\eea
This implies that,
\bea
&\varepsilon\left(C\cdot B\right)=\sum_{i=1}^{N}\varphi\left(\tilde{A}_{i}\right)\,\left((\tilde{B}_{i}\cdot\tilde{B})\otimes\mathbf{1}_{n'}\right)=&\nonumber\\= &\left[\sum_{i=1}^{N}\varphi\left(\tilde{A}_{i}\right)\,\left(\tilde{B}_{i}\otimes\mathbf{1}_{n'}\right)\right]\cdot\left(\tilde{B}\otimes\mathbf{1}_{n'}\right)=\varepsilon\left(C\right)\cdot B& \,. \nonumber
\eea
Finally, since any positive linear functional $\varphi:\mathcal{A}\rightarrow\mathbb{C}$
is completely positive, then the map
\be
\mathrm{id}_{n}\otimes\varphi:M_{n}(\mathbb{C})\otimes\mathcal{A}\rightarrow M_{n}(\mathbb{C})\otimes\mathbb{C}\cong M_{n}(\mathbb{C}) \, ,
\ee
given by 
\be
\left(\mathrm{id}_{n}\otimes\varphi\right)(B\otimes A):=\varphi(A)\,B \, ,
\ee
is positive. Furthermore, the inclusion map $\iota_{n,n'}:M_{n}(\mathbb{C})\rightarrow M_{n}(\mathbb{C})\otimes\mathds{1}_{n'}$
given by $\iota_{n,n'}(B)=B\otimes\mathbf{1}_{n'}$ is trivially positive.
Then, we have that
\be
\varepsilon=\iota_{n,n'}\circ\left(\mathrm{id}_{n}\otimes\varphi\right)\,,
\ee
is positive since it results from the composition of two positive linear transformations.
\end{proof}

Finally, combining lemmas \ref{lemma_ce1} and \ref{lemma_ce2}, we obtain lemma \ref{lemma_cem}. Our approach to the space of conditional expectations $C\left(\mathcal{M},\mathcal{N}\right)$ is more physically oriented, and certainly, it easies the proof of the theorem \ref{cer_thm}. The space $C\left(\mathcal{M},\mathcal{N}\right)$ was first studied long ago by Umegaki using different techniques \cite{umegaki1,umegaki2,umegaki3,umegaki4}. The final result provides a different parametrization of the space, but equivalent to ours.

\section{B. Proof of the entropic certainty principle (theorem \ref{cer_thm})}   \label{appx:proof}
In this appendix, we provide the proof of theorem \ref{cer_thm}. We bring here the equation for the convenience of the reader
\be
S_{\mathcal{M}}\left(\omega|\omega\circ\varepsilon\right)+S_{\mathcal{N}'}\left(\omega|\omega\circ\varepsilon'\right)=\log\lambda\,. \label{cer_eq_ap}
\ee
We start by reminding the representations \eqref{m_inc} for the algebras $\mathcal{N}\subset\mathcal{M}\subset\mathcal{B}\left(\mathcal{H}\right)$.
Automatically, we have for their commutants
\be
\hspace{-1.9mm} \mathcal{M}'  \cong  \bigoplus_{j=1}^{z_{\mathcal{M}}}\mathds{1}_{m_{j}}\otimes M_{m'_{j}}(\mathbb{C}), \hspace{2mm}
\mathcal{\mathcal{N}}' \cong  \bigoplus_{k=1}^{z_{\mathcal{N}}}\mathds{1}_{n_{k}}\otimes M_{n'_{k}}(\mathbb{C}).\label{m_inc_com}
\ee
To prove the theorem we compute each relative of \eqref{cer_eq_ap} separately. Let us focus on the first one
\be
S_{\mathcal{M}}\left(\omega|\omega\circ\varepsilon\right)=-S_{\mathcal{M}}\left(\omega\right)-\mathrm{Tr}_{\mathcal{M}}\left(\rho^{\omega}\log(\rho^{\omega\circ\varepsilon})\right)\,,\label{re_1}
\ee
where $\rho^{\omega}$ and $\rho^{\omega\circ\varepsilon}$ are the
density matrices of the states $\omega$ and $\omega\circ\varepsilon$ on the algebra $\mathcal{M}$. The von Neumann entropy $S_{\mathcal{M}}\left(\omega\right)$ in \eqref{re_1} will later cancel out with an equivalent term coming from the second relative entropy of \eqref{cer_eq_ap}. To analyze the other term, we need an expression for the density matrix $\rho^{\omega\circ\varepsilon}$. This density matrix is defined as the unique element in $\mathcal{M}$ satisfying
\be
\mathrm{Tr}_{\mathcal{M}}\left(\rho^{\omega\circ\varepsilon}A\right)=\omega\left(\varepsilon(A)\right)=\mathrm{Tr}_{\mathcal{M}}\left(\rho^{\omega}\varepsilon(A)\right) \, ,
\ee
for all $A \in \mathcal{M}$. According to lemma \ref{lemma_ce1}, we can decompose the conditional expectation $\varepsilon$ as in \eqref{ce_c1}
\bea
\mathrm{Tr}_{\mathcal{M}}\left(\rho^{\omega}\varepsilon\left(A\right)\right) &= & \mathrm{Tr}_{\mathcal{M}}\left(\rho^{\omega}\bigoplus_{k=1}^{z_{\mathcal{N}}}\varepsilon_{k}\left(A_{k}\right)\right) \nn \\ 
& = & \sum_{k=1}^{z_{\mathcal{N}}}\mathrm{Tr}_{\mathcal{\mathcal{M}}_{k}}\left[\rho_{k}^{\omega}\varepsilon_{k}\left(A_{k}\right)\right]\,, \label{ew_1}
\eea
where we have defined the operators $A_{k}:=P_{k}^{\mathcal{N}}\mathcal{M}P_{k}^{\mathcal{N}}$ and $\rho_{k}^{\omega}:=P_{k}^{\mathcal{N}}\rho^{\omega}P_{k}^{\mathcal{N}}$.
Let us now assume that
\be
\hspace{-2.5mm} A_{k}:=B_{k}\otimes\bigoplus_{j=1}^{z_{\mathcal{M}}}C_{kj}, \hspace{1.8mm} B_{k}\!\in \! M_{n_{k}}(\mathbb{C}),\,C_{kj} \! \in \! M_{\mu_{kj}}(\mathbb{C}).
\ee
Then, according to lemma \ref{lemma_ce2}, there exists states $\varphi_{k}^{\varepsilon}$
on $\bigoplus_{j=1}^{z_{\mathcal{M}}}M_{\mu_{kj}}(\mathbb{C})\cong\mathcal{M}_{k}\cap\mathcal{N}'$
such that
\be
\hspace{-2.5mm} \varepsilon_{k}\!\left(B_{k}\otimes\bigoplus_{j=1}^{z_{\mathcal{M}}}C_{kj}\right) \! =:\varphi_{k}^{\varepsilon}\left(\bigoplus_{j=1}^{z_{\mathcal{M}}}C_{kj}\right) \! \left(B_{k}\otimes\mathbf{1}_{n'_{k}}\right),\label{cex_eps}
\ee
where $n'_{k}:=\sum_{k=1}^{z_{\mathcal{M}}}\mu_{kj}$.\footnote{The numbers $n'_{k}$ defined in this way coincide with the ones introduced in section \ref{section_ce} once we have set $m'_{j}=1$, which
is correct since we are working in the canonical representation of $\mathcal{M}$.} Then, replacing \eqref{cex_eps} into \eqref{ew_1} we have that
\bea
&&\mathrm{Tr}_{\mathcal{M}}\left(\rho^{\omega}\varepsilon\left(A\right)\right)  =  \sum_{k=1}^{z_{\mathcal{N}}}\mathrm{Tr}_{\mathcal{\mathcal{M}}_{k}}\!\left[\rho_{k}^{\omega}\varepsilon_{k}\left(B_{k}\otimes\bigoplus_{j=1}^{z_{\mathcal{M}}}C_{kj}\right)\right]  \hspace{6mm} \nonumber \\
 && = \!\sum_{k=1}^{z_{\mathcal{N}}}\mathrm{Tr}_{\mathcal{\mathcal{N}}_{k}}[\mathrm{Tr}_{\mathcal{M}_{k}\cap\mathcal{N}'}\left(\rho_{k}^{\omega}\right)B_{k}]\,\mathrm{Tr}_{\mathcal{M}_{k}\cap\mathcal{N}'} \!\!\left(\!\rho_{k}^{\varepsilon}\bigoplus_{j=1}^{z_{\mathcal{M}}}C_{kj}\!\right) \!\!, 
\eea
where $\rho_{k}^{\varepsilon}\in\bigoplus_{j=1}^{z_{\mathcal{M}}}M_{\mu_{kj}}(\mathbb{C})$
are the density matrices corresponding to the states $\varphi_{k}^{\varepsilon}$ on $\mathcal{M}_{k}\cap\mathcal{N}'$, and $\mathrm{Tr}_{\mathcal{M}_{k}\cap\mathcal{N}'}(\rho_{k}^{\omega})\in M_{n_{k}}(\mathbb{C})$.
Then, it follows that
\be
\mathrm{Tr}_{\mathcal{M}}\left(\rho^{\omega}\varepsilon\left(A\right)\right)=\mathrm{Tr} _{\mathcal{\mathcal{M}}}\!\left[ \!  \left(\bigoplus_{k=1}^{z_{\mathcal{N}}}\mathrm{Tr}_{\mathcal{M}_{k}\cap\mathcal{N}'}(\rho_{k}^{\omega})\otimes\rho_{k}^{\varepsilon}\right) \! A\right]\!, \hspace{8mm}
\ee
which implies that
\be
\rho^{\omega\circ\varepsilon}=\bigoplus_{k=1}^{z_{\mathcal{N}}}\mathrm{Tr}_{\mathcal{M}_{k}\cap\mathcal{N}'}\left(\rho_{k}^{\omega}\right)\otimes\rho_{k}^{\varepsilon}\in\bigoplus_{k=1}^{z_{\mathcal{N}}}\mathcal{M}_{k}\subset\mathcal{M}\,.\label{rho_ce}
\ee
We now need to find the restriction of the state $\omega$ on $\mathcal{N}$, whose density matrix is denoted by $\rho^{\omega_{\mathcal{N}}}$. Given $A_{k}\in M_{n_{k}}(\mathbb{C})\simeq\mathcal{N}_{k}$, then
\bea
\hspace{-7mm} \mathrm{Tr}_{\mathcal{\mathcal{N}}}\left(\rho^{\omega_{\mathcal{N}}}\bigoplus_{k=1}^{z_{\mathcal{N}}}A_{k}\right)=\sum_{k=1}^{z_{\mathcal{N}}}\mathrm{Tr}_{\mathcal{\mathcal{N}}_{k}}\left[\mathrm{Tr}_{\mathcal{M}_{k}\cap\mathcal{N}'}\left(\rho_{k}^{\omega}\right)A_{k}\right]&&\nonumber\\
 =\mathrm{Tr}_{\mathcal{\mathcal{N}}}\left[\left(\bigoplus_{k=1}^{z_{\mathcal{N}}}\mathrm{Tr}_{\mathcal{M}_{k}\cap\mathcal{N}'}\left(\rho_{k}^{\omega}\right)\right) \! \left(\bigoplus_{k=1}^{z_{\mathcal{N}}}A_{k}\right)\right] && ,\label{so_N}
\eea
which implies that $\mathrm{Tr}_{\mathcal{M}_{k}\cap\mathcal{N}'}\left(\rho_{k}^{\omega}\right)=\rho_{k}^{\omega_{\mathcal{N}}}$.
Then, \eqref{rho_ce} becomes
\be
\rho^{\omega\circ\varepsilon}=\bigoplus_{k=1}^{z_{\mathcal{N}}}\rho_{k}^{\omega_{\mathcal{N}}}\otimes\rho_{k}^{\varepsilon}\,.\label{rho_ce2}
\ee
Now, we are in conditions to compute the second term of equation \eqref{re_1}. After some algebraic manipulations, we find
\bea
\hspace{-7mm} \mathrm{Tr}&&_{\mathcal{M}}\left(\rho \log \left(\rho^{\omega\circ\varepsilon}\right)\right)  \nn \\
&&=-S_{\mathcal{N}}\left(\omega\right)+\sum_{k=1}^{z_{\mathcal{N}}}\mathrm{Tr}_{\mathcal{M}_{k}\cap\mathcal{N}'}\left(\mathrm{Tr}_{\mathcal{N}_{k}}\left(\rho_{k}^{\omega}\right)\log\left(\rho_{k}^{\varepsilon}\right)\right), \hspace{5mm} 
\eea
where the operators $\mathrm{Tr}_{\mathcal{N}_{k}}\left(\rho_{k}^{\omega}\right)\in\bigoplus_{j=1}^{z_{\mathcal{M}}}M_{\mu_{kj}}(\mathbb{C})$. Finally, the relative entropy \eqref{re_1} can be written
\bea
 S_{\mathcal{M}} \! \left(\omega|\omega\circ\varepsilon\right)&=& -S_{\mathcal{M}}\left(\omega\right)+S_{\mathcal{N}}\left(\omega\right) \nn \\
&&- \! \sum_{k=1}^{z_{\mathcal{N}}}\mathrm{Tr}_{\mathcal{M}_{k}\cap\mathcal{N}'} \! \left(\mathrm{Tr}_{\mathcal{N}_{k}}\left(\rho_{k}^{\omega}\right)\log\left(\rho_{k}^{\varepsilon}\right)\right) . \hspace{7mm}\label{re_final}
\eea

The second relative entropy on the l.h.s. of \eqref{cer_eq_ap} can be computed in a similar way. In the end, we arrive to an equivalent expression as in \eqref{re_final} but for the inclusion of algebras $\mathcal{M}'\subset\mathcal{N}'$, where these algebras are as in \eqref{m_inc_com}.
In this case, we introduce the algebras
\bea
\mathcal{\mathcal{M}}'_{j} & := & P_{j}^{\mathcal{M}}\mathcal{\mathcal{M}}'P_{j}^{\mathcal{M}}\cong\mathds{1}_{m_{j}}\otimes M_{m'_{j}}(\mathbb{C})\,,\\
\mathcal{\mathcal{N}}'_{j} & := & P_{j}^{\mathcal{M}}\mathcal{\mathcal{N}}'P_{j}^{\mathcal{M}}\cong\left(\bigoplus_{k=1}^{z_{\mathcal{N}}}M_{\mu_{kj}}\left(\mathbb{C}\right)\right)\otimes M_{m'_{j}}(\mathbb{C})\,, \hspace{7mm}
\eea
where $P_{j}^{\mathcal{M}}$ ($j=1,\ldots,z_{\mathcal{M}}$) are the
minimal central projectors of $\mathcal{M}$. The density matrix
of the state $\omega$ on the algebra $\mathcal{N}'$ is denoted by
$\tilde{\rho}^{\omega}\in\bigoplus_{k=1}^{z_{\mathcal{N}}}M_{n'_{k}}(\mathbb{C})$,
and we define the operators 
\be
\tilde{\rho}_{j}^{\omega}:=P_{j}^{\mathcal{M}}\rho^{\omega}P_{j}^{\mathcal{M}}\in\mathcal{\mathcal{N}}'_{j}\,.\label{rho_j}
\ee
Again, using lemmas \ref{lemma_ce1} and \ref{lemma_ce2}, we have that
\bea
&& \varepsilon'=\bigoplus_{j=1}^{z_{\mathcal{M}}}\varepsilon'_{j}\,,\qquad\varepsilon'_{j}\in C\left(\mathcal{\mathcal{N}}'_{j},\mathcal{\mathcal{M}}'_{j}\right) , \\
&& \varepsilon'_{j} \! \left( \!\! \left(\bigoplus_{k=1}^{z_{\mathcal{N}}}C'_{kj} \! \right) \! \otimes \! B'_{j} \! \right) \! =\mathrm{Tr}_{\mathcal{N}'_{j}\cap\mathcal{M}} \! \left( \! \tilde{\rho}_{j}^{\varepsilon'} \! \bigoplus_{k=1}^{z_{\mathcal{N}}}C'_{kj}\right) \!\! \left(\mathbf{1}_{m_{j}} \! \otimes \! B'_{j}\right), \nn
\eea
for all $C'_{kj}\in M_{\mu_{kj}}\left(\mathbb{C}\right)$ and $B'_{j}\in M_{m'_{j}}(\mathbb{C})$,
where we have that $m_{j}=\sum_{k=1}^{z_{\mathcal{N}}}\mu_{kj}$ for the consistency of the inclusion. The density matrices $\tilde{\rho}_{j}^{\varepsilon'}$ represent states on $\mathcal{N}'_{j}\cap\mathcal{M}\cong\bigoplus_{k=1}^{z_{\mathcal{N}}}M_{\mu_{kj}}\left(\mathbb{C}\right)$.

Following algebraic manipulations analogous to the ones applied to the first term, the second term on the l.h.s. of \eqref{cer_eq_ap} reads
\bea
\hspace{-7mm} S_{\mathcal{N}'} \! \left(\omega|\omega\circ\varepsilon'\right)\!&=&\! -S_{\mathcal{N}'}\left(\omega\right)+S_{\mathcal{M}'}\left(\omega\right) \nn \\
&& \! - \! \sum_{j=1}^{z_{\mathcal{M}}}\mathrm{Tr}_{\mathcal{N}'_{j}\cap\mathcal{M}} \! \left(\mathrm{Tr}_{\mathcal{M}'_{j}} \! \left(\tilde{\rho}_{j}^{\omega}\right)\log\left(\tilde{\rho}_{j}^{\varepsilon'}\right) \! \right) \! .\label{re_final_2}
\eea
We now conclude that
\bea
\hspace{-7mm} S_{\mathcal{M}}\left(\omega|\omega\circ\varepsilon\right)&+&S_{\mathcal{N}'}\left(\omega|\omega\circ\varepsilon'\right) =\!\!\! \nn \\
 && \hspace{-5mm} -\sum_{k=1}^{z_{\mathcal{N}}}\mathrm{Tr}_{\mathcal{M}_{k}\cap\mathcal{N}'}\left(\mathrm{Tr}_{\mathcal{N}_{k}}\left(\rho_{k}^{\omega}\right)\log\left(\rho_{k}^{\varepsilon}\right)\right) \nn \\
 && \hspace{-5mm} -\sum_{j=1}^{z_{\mathcal{M}}}\mathrm{Tr}_{\mathcal{N}'_{j}\cap\mathcal{M}}\left(\mathrm{Tr}_{\mathcal{M}'_{j}}\left(\tilde{\rho}_{j}^{\omega}\right)\log\left(\tilde{\rho}_{j}^{\varepsilon'}\right)\right),\label{casi_cer}
\eea
where we have used that $S_{\mathcal{M}}\left(\omega\right)=S_{\mathcal{M}'}\left(\omega\right)$
and $S_{\mathcal{N}}\left(\omega\right)=S_{\mathcal{N}'}\left(\omega\right)$
since the state $\omega$ is pure in the global algebra $\mathcal{B}(\mathcal{H})$.

In order to simplify the r.h.s. of equation \eqref{casi_cer}, we notice that
\bea
\hspace{-7mm} \mathrm{Tr}_{\mathcal{N}_{k}}\left(\rho_{k}^{\omega}\right)\in\bigoplus_{j=1}^{z_{\mathcal{M}}}M_{\mu_{kj}}(\mathbb{C}) & \Rightarrow & \mathrm{Tr}_{\mathcal{N}_{k}}\left(\rho_{k}^{\omega}\right)=\bigoplus_{j=1}^{z_{\mathcal{M}}}\rho_{kj}^{\omega}\,,\\
\hspace{-7mm} \mathrm{Tr}_{\mathcal{M}'_{j}}\left(\tilde{\rho}_{j}^{\omega}\right)\in\bigoplus_{k=1}^{z_{\mathcal{N}}} M_{\mu_{kj}}(\mathbb{C}) & \Rightarrow & \mathrm{Tr}_{\mathcal{M}'_{j}}\left(\tilde{\rho}_{j}^{\omega}\right)=\bigoplus_{k=1}^{z_{\mathcal{N}}}\tilde{\rho}_{jk}^{\omega}\,.
\eea
Besides, a straightforward computation, like the one we did in equation \eqref{so_N},
shows that $\rho_{kj}^{\omega}=\tilde{\rho}_{jk}^{\omega}$ and 
\begin{equation}
\bigoplus_{j=1}^{z_{\mathcal{M}}}\bigoplus_{k=1}^{z_{\mathcal{N}}}\rho_{kj}^{\omega}\in\bigoplus_{j=1}^{z_{\mathcal{M}}}\bigoplus_{k=1}^{z_{\mathcal{N}}}M_{\mu_{kj}}(\mathbb{C})\,,
\end{equation}
is the density matrix of the state $\omega$ on the algebra
$\mathcal{N}'\cap\mathcal{M}$.

We also remind that the density matrices $\rho_{k}^{\varepsilon}$
and $\tilde{\rho}_{j}^{\varepsilon'}$, which define the conditional expectation $\varepsilon$ and $\varepsilon '$, can be written conveniently as
\be
\rho_{k}^{\varepsilon} = \bigoplus_{j=1}^{z_{\mathcal{M}}}p_{kj}^{\varepsilon}\,\rho_{kj}^{\varepsilon}\,, \hspace{5mm} 
\tilde{\rho}_{j}^{\varepsilon'} = \bigoplus_{k=1}^{z_{\mathcal{N}}}\tilde{p}_{jk}^{\varepsilon'}\,\tilde{\rho}_{jk}^{\varepsilon'}\,,
\ee
where $\rho_{kj}^{\varepsilon}$ and $\tilde{\rho}_{jk}^{\varepsilon'}$ are normalized and the leftover probabilities add up to one. Substituting these previous expressions and doing some algebraic manipulations, \eqref{casi_cer} becomes
\bea
S&&_{\mathcal{M}}\left(\omega|\omega\circ\varepsilon\right)+S_{\mathcal{N}'}\left(\omega|\omega\circ\varepsilon'\right)= \nn \\ 
&& \hspace{-1mm} -\sum_{j,k}\mathrm{Tr}_{M_{\mu_{kj}}(\mathbb{C})} \! \left[\rho_{jk}^{\omega}\left(\log\left(p_{kj}^{\varepsilon}\,\rho_{kj}^{\varepsilon}\right) \!+\! \log\left(\tilde{p}_{jk}^{\varepsilon'}\,\tilde{\rho}_{jk}^{\varepsilon'}\right)\right)\right] \!. \hspace{7mm} \label{casi_cer_2}
\eea
Given this last expression, if expression \eqref{cer_eq_ap} holds, it must exist a positive number $\lambda>0$, independent of $j$ and $k$, such that
\be
\log\left(p_{kj}^{\varepsilon}\,\rho_{kj}^{\varepsilon}\right)+\log\left(\tilde{p}_{jk}^{\varepsilon'}\,\tilde{\rho}_{jk}^{\varepsilon'}\right)=-\log(\lambda)\cdot\mathbf{1}_{\mu_{kj}}\,,\label{vinculo}
\ee
for all $j=1,\ldots,z_\mathcal{M}$ and $k=1,\ldots,z_\mathcal{N}$. To complete the proof, we must show that given $\varepsilon\in\hat{C}(\mathcal{M},\mathcal{N})$, parametrized by
$p_{kj}^{\varepsilon}$ and $\rho_{kj}^{\varepsilon}$,
we can choose $\tilde{p}_{jk}^{\varepsilon'}$ and $\tilde{\rho}_{jk}^{\varepsilon'}$, parametrized by $\varepsilon'\in\hat{C}(\mathcal{M},\mathcal{N})$, satisfying \eqref{vinculo}.

Let  $ \{ t_{l,jk} \}$ and $\{ \tilde{t}_{l,jk} \}$ be the eigenvalues of $\rho_{kj}^{\varepsilon}$ and $\tilde{\rho}_{jk}^{\varepsilon'}$ respectively ($l=1,\ldots,\mu_{kj}$). For \eqref{vinculo} to hold, the operator $\tilde{\rho}_{jk}^{\varepsilon'}$ must be diagonal in the same basis as $\rho_{kj}^{\varepsilon}$. Then, we must have
\be
p_{kj}^{\varepsilon}\,\tilde{p}_{jk}^{\varepsilon'}\,\tilde{t}_{l,jk}\,t_{l,jk}=\frac{1}{\lambda}\,,\hspace{5mm} \forall l=1,\ldots,\mu_{kj}\,.\label{det_ep_1}
\ee
This implies
\be
\hspace{-1.7mm}p_{kj}^{\varepsilon}\,\tilde{p}_{jk}^{\varepsilon'}=p_{kj}^{\varepsilon}\,\tilde{p}_{jk}^{\varepsilon'}\sum_{l=1}^{\mu_{kj}}\tilde{t}_{l,jk}=\frac{1}{\lambda}\sum_{l=1}^{\mu_{kj}}\frac{1}{t_{l,jk}}\equiv\frac{1}{\lambda}\cdot\lambda_{jk}(\varepsilon)\,.\label{det_ep_2}
\ee
Imposing
\be
1=\sum_{k=1}^{z_{\mathcal{N}}}\tilde{p}_{jk}^{\varepsilon'}=\frac{1}{\lambda}\sum_{k=1}^{z_{\mathcal{N}}}\frac{\lambda_{jk}}{p_{kj}^{\varepsilon}}\,,
\ee
This is only posible when $\sum_{k=1}^{z_{\mathcal{N}}}\frac{\lambda_{jk}}{p_{kj}^{\varepsilon}}\equiv\lambda$ is independent of the subindex $j$. The set of conditional expectations satisfying this condition was called $\hat{C}\left(\mathcal{M},\mathcal{N}\right)\subset C\left(\mathcal{M},\mathcal{N}\right)$ in the main text. In this case, the theorem is satisfied provided the dual conditional expectation is parametrized by
\be
\tilde{p}_{jk}^{\varepsilon'}:=\frac{1}{\lambda}\frac{\lambda(\varepsilon_{jk})}{p_{kj}^{\varepsilon}} \,, \hspace{6mm} 
\tilde{t}_{l,jk}:=\frac{1}{\lambda}\frac{1}{p_{kj}^{\varepsilon}\,\tilde{p}_{jk}^{\varepsilon'}\,t_{l,jk}}\,.\label{eprime_1}
\ee
A straightforward computation shows
\be
\lambda(\varepsilon')=\lambda(\varepsilon)=\lambda\,.
\ee

To end this appendix, we generalize \eqref{cer_eq_ap} to non-connected inclusions of algebras. In this case, $\mathcal{N} \subset \mathcal{M}$ can be uniquely decomposed as
\be
\mathcal{M}=\bigoplus_{i=1}^{z_c}  \mathcal{M}_i\, , \hspace{4mm} \mathcal{N}=\bigoplus_{i=1}^{z_c}  \mathcal{N}_i\,, \hspace{4mm} \mathcal{N}_i \subset \mathcal{M}_i \, ,
\ee
where $\mathcal{M}_i:=E_i \mathcal{M} E_i$, $\mathcal{N}_i:=E_i \mathcal{N} E_i$, and $\{E_i\,:\,i=1,\ldots,z_c\}$ are the minimal projectors of the common center $\mathcal{Z}(\mathcal{M}) \cap \mathcal{Z}(\mathcal{N})$. A conditional expectation $\varepsilon \in C \left(\mathcal{M},\mathcal{N}\right)$ can be uniquely descomposed as $\varepsilon=\bigoplus_i\varepsilon_i$, where $\varepsilon_i \in C\left(\mathcal{M}_i,\mathcal{N}_i\right)$. Then, we define $\hat{C} \left(\mathcal{M},\mathcal{N}\right)$ to be set formed by all conditional expectations $\varepsilon=\bigoplus_i\varepsilon_i$ such that $\varepsilon_i \in \hat{C} \left(\mathcal{M}_i,\mathcal{N}_i\right)$ for all $i=1,\ldots,z_c$.
\begin{cor} \label{cer_cor}
Let $\mathcal{N}\subset\mathcal{M}\subset\mathcal{B}\left(\mathcal{H}\right)$
be a general inclusion of finite dimensional algebras. Then, for every $\varepsilon\in\hat{C}\left(\mathcal{M},\mathcal{N}\right)$
there exists a unique $\varepsilon'\in\hat{C}\left(\mathcal{M},\mathcal{N}\right)$ such that
\be
S_{\mathcal{M}}\left(\omega|\omega\circ\varepsilon\right)+S_{\mathcal{N}'}\left(\omega|\omega\circ\varepsilon'\right)= \sum_{i=1}^{z_c} \omega(E_i)  \log\left(\lambda_i\right)\,,\label{cer_eq_cor}
\ee
holds for any global pure state $\omega$ on $\mathcal{B}\left(\mathcal{H}\right)$, and where $\lambda_i:=\lambda(\varepsilon_i)\equiv\lambda(\varepsilon'_i)$ is the algebraic index of the ``partial" conditional expectation $\varepsilon_i$.
\end{cor}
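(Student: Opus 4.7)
My plan is to reduce the general (non-connected) case to the connected case already handled by Theorem \ref{cer_thm}. The strategy has three ingredients: decompose every relevant algebra, state, and conditional expectation along the central projectors $E_i$; verify that the restriction of a globally pure $\omega$ to each block is still pure; and then combine the per-block entropic certainties with weights $\omega(E_i)$.

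First I would set up the block decomposition. Because each $E_i$ lies in $\mathcal{Z}(\mathcal{M})\cap\mathcal{Z}(\mathcal{N})$, it is central in $\mathcal{M}$ (hence in $\mathcal{M}'$) and central in $\mathcal{N}$ (hence in $\mathcal{N}'$), so every element of the four algebras is block-diagonal with respect to $\mathcal{H}=\bigoplus_i E_i\mathcal{H}$. This yields $\mathcal{M}=\bigoplus_i\mathcal{M}_i$, $\mathcal{N}=\bigoplus_i\mathcal{N}_i$, $\mathcal{M}'=\bigoplus_i(\mathcal{M}_i)'$, $\mathcal{N}'=\bigoplus_i(\mathcal{N}_i)'$, with commutants taken inside $\mathcal{B}(E_i\mathcal{H})$, and each inclusion $\mathcal{N}_i\subset\mathcal{M}_i$ is connected by construction of the $E_i$. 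The decomposition $\varepsilon=\bigoplus_i\varepsilon_i$ with $\varepsilon_i\in\hat{C}(\mathcal{M}_i,\mathcal{N}_i)$ is part of the hypothesis, and Theorem \ref{cer_thm} applied to the connected inclusion $\mathcal{N}_i\subset\mathcal{M}_i$ produces a unique dual $\varepsilon'_i\in\hat{C}((\mathcal{N}_i)',(\mathcal{M}_i)')$ with index $\lambda_i$; I then set $\varepsilon':=\bigoplus_i\varepsilon'_i$.

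Next I would handle the relative entropy on $\mathcal{M}$. Since $E_i\in\mathcal{N}$ the conditional expectation satisfies $\varepsilon(E_i)=E_i$, so $(\omega\circ\varepsilon)(E_i)=\omega(E_i)$. The density matrices accordingly block-decompose as
\begin{equation}
\rho^{\omega}=\bigoplus_i\omega(E_i)\,\rho^{\omega_i},\qquad \rho^{\omega\circ\varepsilon}=\bigoplus_i\omega(E_i)\,\rho^{\omega_i\circ\varepsilon_i},
\end{equation}
where $\omega_i$ denotes the conditional state on $\mathcal{B}(E_i\mathcal{H})$. Taking logarithms and substituting into the definition of relative entropy, the common classical weights $\log\omega(E_i)$ cancel between $\log\rho^{\omega}$ and $\log\rho^{\omega\circ\varepsilon}$, leaving
\begin{equation}
S_{\mathcal{M}}(\omega|\omega\circ\varepsilon)=\sum_i\omega(E_i)\,S_{\mathcal{M}_i}(\omega_i|\omega_i\circ\varepsilon_i).
\end{equation}
The identical reasoning for $\mathcal{N}'$, using $E_i\in\mathcal{M}'\subset\mathcal{N}'$ and $\varepsilon'(E_i)=E_i$, yields the analogous formula for $S_{\mathcal{N}'}(\omega|\omega\circ\varepsilon')$.

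The final ingredient is the observation that if $\omega$ is globally pure, represented by $|\psi\rangle$, then $\omega_i$ is represented by the normalized vector $E_i|\psi\rangle/\sqrt{\omega(E_i)}$ and is itself pure on $\mathcal{B}(E_i\mathcal{H})$. Theorem \ref{cer_thm} then applies inside each block and gives
\begin{equation}
S_{\mathcal{M}_i}(\omega_i|\omega_i\circ\varepsilon_i)+S_{(\mathcal{N}_i)'}(\omega_i|\omega_i\circ\varepsilon'_i)=\log\lambda_i.
\end{equation}
Adding these weighted by $\omega(E_i)$ reproduces \eqref{cer_eq_cor}, and uniqueness of $\varepsilon'$ follows from the uniqueness of each $\varepsilon'_i$ in the connected case. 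The main subtlety I anticipate is in justifying cleanly the block decomposition of $\mathcal{N}'$ together with the disappearance of any classical entropy contribution from the mixing over the $E_i$: both rest on the fact that the $E_i$ are central in every algebra in sight, and although no new computation is required beyond that already performed in appendix B, this reduction step has to be stated with care so that the uniqueness and the index identification descend correctly from each block.
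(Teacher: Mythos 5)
Your proposal is correct and follows exactly the route the paper sketches: split everything along the minimal projectors $E_i$ of the common center, note that $E_i\in\mathcal{N}$ and $E_i\in\mathcal{M}'$ force $\varepsilon(E_i)=E_i$ and $\varepsilon'(E_i)=E_i$ so the classical weights $\omega(E_i)$ factor out of both relative entropies, observe that $E_i|\psi\rangle/\sqrt{\omega(E_i)}$ is again a pure vector state on $\mathcal{B}(E_i\mathcal{H})$, and apply Theorem \ref{cer_thm} blockwise. The paper's own proof is a one-line description of this same reduction, so your write-up simply supplies the details it leaves implicit.
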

\begin{proof}
The proof consists of splitting the relative entropies on \eqref{cer_eq_cor} into relative entropies on the algebras  $\mathcal{M}_i$ and $\mathcal{N}'_i$, and apply theorem \ref{cer_thm} to each connected inclusion $\mathcal{N}_i \subset \mathcal{M}_i$.
\end{proof}

\section{C. The algebraic index of inclusion of algebras}\label{appx:index}
Given an inclusion of von Neumann type II$_{1}$ subfactors $\mathcal{N}\subset\mathcal{M}\subset\mathcal{B}(\mathcal{H})$,
Jones proposed \cite{Jones1983} an algebraic index $[\mathcal{M}:\mathcal{N}]\geq1$,
which ``measures'', in a certain sense, the size of $\mathcal{N}$ inside $\mathcal{M}$. The generalization to all types of algebras was developed independently by Kosaki and Longo \cite{KOSAKI1986123,longo1989}. They found that the algebraic index is most naturally attached to a conditional expectation $\varepsilon\in C(\mathcal{M},\mathcal{N})$, instead of an inclusion of subfactors $\mathcal{N}\subset\mathcal{M}\subset\mathcal{B}(\mathcal{H})$.

To explain the definition, let us first describe the space of weights $P(\mathcal{M},\mathcal{N})$.
A \textit{weight} $\eta\in P(\mathcal{M},\mathcal{N})$ is an unbounded
(and unnormalized) positive map $\eta:\mathcal{M}\rightarrow\mathcal{N}$
with dense domain in $\mathcal{M}_{+}$ (the positive subspace of
$\mathcal{M}$) satisfying the bimodule property \eqref{ce_def_prop}.
In particular, we have that $C(\mathcal{M},\mathcal{N})\subset P(\mathcal{M},\mathcal{N})$. Connes established a
canonical bijection between $P(\mathcal{M},\mathcal{N})$ and $P(\mathcal{N}',\mathcal{M}')$, see \cite{connes}. However, this bijection, in general, does not map $C(\mathcal{M},\mathcal{N})$
into $C(\mathcal{N}',\mathcal{M}')$. Still, what it remains true is that for a conditional expectation
$\varepsilon\in C(\mathcal{M},\mathcal{N})$, with $\epsilon^{-1}$ being the previous canonical Connes inverse (not neccessarily a conditional expectation itself), we have that $A\varepsilon^{-1}\left(\mathbf{1}\right)A^{\dagger}=\varepsilon^{-1}\left(\mathbf{1}\right)$
for any unitary $A\in\mathcal{M}$, and hence, $\varepsilon^{-1}\left(\mathbf{1}\right)\in\mathcal{Z}(\mathcal{M})$
whenever it is finite.

Let us now start by assuming $\mathcal{M}$ is a factor. In this case
\begin{equation}
\varepsilon^{-1}\left(\mathbf{1}\right)=\lambda(\varepsilon)\cdot\mathbf{1}\,,\hspace{6mm}1\leq\lambda(\varepsilon)\leq+\infty\,.
\end{equation}
The number $\lambda(\varepsilon)$ is called the index of
the conditional expectation $\varepsilon$. This definition is due to Kosaki \cite{KOSAKI1986123}. If there
exists $\varepsilon\in C(\mathcal{M},\mathcal{N})$ such $\lambda(\varepsilon)<+\infty$,
we say that $\mathcal{N}\subset\mathcal{M}$ is an inclusion of factors of \textit{finite index}. In this case, it was shown in \cite{longo1989} that there exists a unique conditional expectation $\varepsilon_{0}\in C(\mathcal{M},\mathcal{N})$
such that
\be
\lambda(\varepsilon_{0})=\min\left\{ \lambda(\varepsilon)\,:\,\varepsilon\in C(\mathcal{M},\mathcal{N})\right\} \,.
\ee
The number $\lambda(\varepsilon_{0})$ coincides with
the algebraic index $[\mathcal{M}:\mathcal{N}]$ defined by Jones whenever the relative commutant $\mathcal{N}'\cap\mathcal{M} $ is a factor. In such cases, the conditional expectational that minimize the index is the one that preserves the trace on $\mathcal{M}$. In the most general case, they do not coincide. If $\mathcal{N}\subset\mathcal{M}$ is not of a finite index, we simply
set $[\mathcal{M}:\mathcal{N}]=+\infty$. It was also shown by Kosaki \cite{KOSAKI1986123} that this definition implies, and indeed is equivalent to, the Pimsner-Popa bound \eqref{popa_eq}.

For an inclusion of finite index, given $\varepsilon\in C(\mathcal{M},\mathcal{N})$
we can define the dual conditional expectation $\varepsilon'\in C(\mathcal{N}',\mathcal{M}')$ by means of
\be
\varepsilon'(\cdot):=\frac{1}{\lambda(\varepsilon)}\varepsilon^{-1}(\cdot)\,.\label{dual_ce}
\ee
Notice that $\lambda(\varepsilon)=\lambda(\varepsilon')$ for all $\varepsilon\in C(\mathcal{M},\mathcal{N})$ with finite index.

Probably the simplest example to illustrate these ideas is the case of an inclusion of finite dimensional factors
\bea
\mathcal{N} & := & M_{n}(\mathbb{C})\otimes\mathbf{1}_{d}\otimes\mathbf{1}_{m'}\,,\\
\mathcal{M} & := & M_{n}(\mathbb{C})\otimes M_{d}(\mathbb{C})\otimes\mathbf{1}_{m'}\,.
\eea
According to the discussion on appendix \ref{appx:rel}, any $\varepsilon\in C(\mathcal{M},\mathcal{N})$ is determined by a state $\rho^{\epsilon}\in M_{d}(\mathbb{C})$ on $\mathcal{N}'\cap\mathcal{M}\simeq M_{d}(\mathbb{C})$ by means of
\be
\varepsilon\left(A\otimes B\otimes\mathbf{1}_{m'}\right)=\mathrm{Tr}_{M_{d}(\mathbb{C})}\left(\rho^{\epsilon}B\right)\left(A\otimes\mathbf{1}_{d}\otimes\mathbf{1}_{m'}\right).
\ee
As shown in \cite{longo1989}, the index of $\varepsilon$ is
\be
\lambda(\varepsilon)=\sum_{j=1}^{d}\frac{1}{t_{j}}\,,\label{index_finite_factors}
\ee
where $t_{j}$ are the eigenvalues of $\rho^{\epsilon}$. 
If $\rho^{\epsilon}$ is not invertible, then $\lambda(\varepsilon)=+\infty$.
A straightforward computation shows that the conditional expectation
$\varepsilon_{0}$ that minimizes the index is the one having all
equal eigenvalues $t_{j}=1/d$. In this case,
\be
\lambda(\varepsilon_{0})=[\mathcal{M}:\mathcal{N}]=d^{2}\,.
\ee

The more general case of an inclusion of algebras with centers was developed in \cite{teruya}. Let $P_{j}^{\mathcal{M}}$ ($j=1,\ldots,z_{\mathcal{M}}$) be the minimal projectors of $\mathcal{Z}(\mathcal{M})$. According
to our discussion above, we must have that
\be
\Lambda(\varepsilon):=\varepsilon^{-1}\left(\mathbf{1}\right)=\sum_{j=1}^{z_{\mathcal{M}}}c_{j}\,P_{j}^{\mathcal{M}}\,,\quad0\leq c_{j}\leq+\infty\,.
\ee
In this case, the index is an operator belonging
to the center of $\mathcal{M}$. According to the characterization
of the space $C(\mathcal{M},\mathcal{N})$ discussed in appendix \ref{appx:rel}, we have that \cite{teruya}
\be
c_{j} \equiv \sum_{k=1}^{z_{\mathcal{N}}} p_{jk}^{-1}\,\lambda(\varepsilon_{jk})\,,\label{index_op}
\ee
where $\lambda(\varepsilon_{jk})$ are the ``partial'' indices of
the conditional expectations $\varepsilon_{jk}\in C(\mathcal{M}_{jk},\mathcal{N}_{jk})$.
Notice that since $\mathcal{N}_{jk}$ and $\mathcal{M}_{jk}$ are
factors, $\lambda(\varepsilon_{jk})$ is a number.
In the case where $\mathcal{M}_{jk}$ is finite dimensional (and hence
$\mathcal{N}_{jk}$), $\lambda(\varepsilon_{jk})$ can be computed
according \eqref{index_finite_factors}. To obtain a number from \eqref{index_op},
we can take its operator norm
\be
\lambda(\varepsilon):=\left\Vert \Lambda(\varepsilon)\right\Vert =\max_{j=1,\ldots,z_{\mathcal{M}}}\left\{ c_{j}\right\} \,.\label{scalar_index}
\ee
In the particular case when all the constants \eqref{index_op} are independent of $j=1,\ldots,z_{\mathcal{M}}$, we have that
\bea
\Lambda(\varepsilon) & = & \lambda(\varepsilon)\cdot\mathbf{1}_{\mathcal{H}}\,,\label{good_ce}\\
\lambda(\varepsilon) & = & c_{j}=\sum_{k=1}^{z_{\mathcal{N}}}p_{jk}^{-1}\,\lambda(\varepsilon_{jk})\,,\hspace{4mm} \forall j=1,\ldots,z_{\mathcal{M}}\,.\label{good_ce2}
\eea
i.e. the index is a scalar. Furthermore, if $\lambda(\varepsilon)<+\infty$,
we can define the dual conditional expectation $\varepsilon'\in C\left(\mathcal{N}',\mathcal{M}'\right)$
as in \eqref{dual_ce}. It follows that $\lambda(\varepsilon)=\lambda(\varepsilon')$
and $\varepsilon'$ also satisfies the previous equation \eqref{good_ce}. In
these lines, it is useful to define
\be
\hat{C} \! \left(\mathcal{M},\mathcal{N}\right):=\left\{ \varepsilon \! \in \!  C \! \left(\mathcal{M},\mathcal{N}\right):\lambda(\varepsilon) \! < \! +\infty , \,\varepsilon\textrm{ satisfies }\eqref{good_ce}\right\} . \nn
\ee
The following key theorem, proven in \cite{teruya}, shows that $\hat{C}\left(\mathcal{M},\mathcal{N}\right)$ is non-empty and it contains the conditional expectation that minimizes the index norm \eqref{scalar_index}. 
\begin{thm}
Let $\mathcal{N}\subset\mathcal{M}\subset\mathcal{B}(\mathcal{H})$
be a connected inclusion,  $\varepsilon\in C\left(\mathcal{M},\mathcal{N}\right)$
a conditional expectation with finite norm index \eqref{scalar_index},
and $\varepsilon_{jk}\in C(\mathcal{M}_{jk},\mathcal{N}_{jk})$ the
``partial'' conditional expectations determined by $\varepsilon$.
Then, there exists $\tilde{\varepsilon}\in\hat{C}\left(\mathcal{M},\mathcal{N}\right)$ (with scalar index),
having the same ``partial'' conditional expectations $\varepsilon_{jk}\in C(\mathcal{M}_{jk},\mathcal{N}_{jk})$, such that
\begin{equation}
\lambda(\tilde{\varepsilon})\leq\lambda(\varepsilon)\,.
\end{equation}
\end{thm}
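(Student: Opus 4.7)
The plan is to reduce the statement to a finite-dimensional minimax problem on a product of simplices, and to show that the connected-inclusion hypothesis forces the minimiser to equalise the ``partial'' indices $\lambda_{j}(\varepsilon)$ defined in~\eqref{lf}.

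By Lemma~\ref{lemma_cem}, fixing the partial conditional expectations $\varepsilon_{kj}\in C(\mathcal{M}_{kj},\mathcal{N}_{kj})$ (as required by the theorem), any $\varepsilon'\in C(\mathcal{M},\mathcal{N})$ with these same partials is determined by probabilities $p_{kj}\ge 0$ satisfying $\sum_{j}p_{kj}=1$ for each $k$ (summed over $j$ with $\mu_{kj}>0$). In this parametrisation the quantities $\lambda_j(\varepsilon')$ of~\eqref{lf} depend only on $p=(p_{kj})$,
\begin{equation*}
c_{j}(p)\;:=\;\lambda_{j}(\varepsilon')\;=\;\sum_{k}\frac{\lambda(\varepsilon_{kj})}{p_{kj}},
\end{equation*}
with the $\lambda(\varepsilon_{kj})$ fixed. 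The norm index is $F(p):=\max_{j}c_{j}(p)$, and the condition $\tilde\varepsilon\in\hat C(\mathcal{M},\mathcal{N})$ amounts precisely to the $c_j(p)$ being independent of $j$. Hence the task reduces to finding $\tilde p$ in $\Delta:=\prod_{k}\{p_{k\cdot}\ge 0:\sum_{j}p_{kj}=1\}$ minimising $F$ and at which all $c_j(\tilde p)$ coincide.

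Existence of a minimiser is routine: $F$ is convex and continuous in the relative interior of the compact set $\Delta$, and $F(p)\to+\infty$ whenever some $p_{kj}$ with $\mu_{kj}>0$ tends to $0$; hence the minimum is attained at an interior point $\tilde p$ with all $p_{kj}>0$. The crux is the equality of the $c_j(\tilde p)$, which I would obtain by rewriting the minimax as ``minimise $t$ subject to $c_j(p)-t\le 0$ and $\sum_{j}p_{kj}-1=0$'' and applying KKT with multipliers $\alpha_j\ge 0$ and $\beta_k\in\mathbb{R}$. Stationarity in $t$ gives $\sum_{j}\alpha_j=1$; stationarity in $p_{kj}$ gives $\alpha_j\lambda(\varepsilon_{kj})=\beta_k p_{kj}^{\,2}$ for each edge $(k,j)$ with $\mu_{kj}>0$. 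If $c_{j^\star}(\tilde p)<t$ for some $j^\star$, complementary slackness forces $\alpha_{j^\star}=0$; combined with $p_{kj^\star}>0$ this yields $\beta_k=0$ for every $k$ adjacent to $j^\star$ in the bipartite graph $G$ on vertex sets $\{1,\ldots,z_{\mathcal{M}}\}\sqcup\{1,\ldots,z_{\mathcal{N}}\}$ with edges $\{(j,k):\mu_{kj}>0\}$, which in turn forces $\alpha_{j'}=0$ for every $j'$ adjacent to such a $k$.

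The main obstacle, and the place where the algebraic hypothesis is used, is now to close this cascade. The condition $\mathcal{Z}(\mathcal{M})\cap\mathcal{Z}(\mathcal{N})=\mathbb{C}\mathbf{1}$ is equivalent to connectedness of $G$ (any element of the intersection is a projection cut out by a union of connected components of $G$). Iterating therefore propagates $\alpha=0$ throughout $G$, contradicting $\sum_{j}\alpha_{j}=1$. Hence $c_j(\tilde p)=t$ for every $j$, so $\tilde\varepsilon\in\hat C(\mathcal{M},\mathcal{N})$, and the bound $\lambda(\tilde\varepsilon)=t=F(\tilde p)\le F(p^{\mathrm{orig}})=\lambda(\varepsilon)$ follows from minimality. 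Beyond this KKT/connectedness cascade, the argument is compactness plus the parametrisation of Lemma~\ref{lemma_cem}; the one delicate technical point is interior minimality of $\tilde p$, which licenses smooth KKT and rests entirely on the boundary blow-up of $F$.
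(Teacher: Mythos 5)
Your argument is correct, but there is nothing in the paper to compare it against line by line: the paper does not prove this theorem, it quotes it from \cite{teruya}. Your route is therefore a self-contained, elementary substitute for that citation. You fix the partial conditional expectations $\varepsilon_{jk}$, use Lemma \ref{lemma_cem} to reduce the residual freedom to the probabilities $p=(p_{kj})$ on a product of simplices, minimise $F(p)=\max_j c_j(p)$ with $c_j(p)=\sum_k \lambda(\varepsilon_{jk})/p_{kj}$ as in \eqref{index_op}--\eqref{scalar_index}, and show that the minimiser equalises the $c_j$, hence lies in $\hat C(\mathcal{M},\mathcal{N})$ and satisfies $\lambda(\tilde\varepsilon)=F(\tilde p)\le F(p^{\mathrm{orig}})=\lambda(\varepsilon)$. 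The two delicate points both check out: (i) the boundary blow-up of $F$ forces an interior minimiser, and since the programme is convex with affine equality constraints and Slater's condition holds, the KKT multipliers $\alpha_j,\beta_k$ exist there; (ii) the cascade $\alpha_{j^\star}=0\Rightarrow\beta_k=0\Rightarrow\alpha_{j'}=0$ uses at its second step that $\lambda(\varepsilon_{kj'})>0$, which is supplied by $\lambda(\varepsilon_{kj'})\ge 1$ for any conditional expectation between factors (you should say this explicitly), and the identification of $\mathcal{Z}(\mathcal{M})\cap\mathcal{Z}(\mathcal{N})=\mathds{1}$ with connectedness of the bipartite graph $\{(j,k):\mu_{kj}>0\}$ is the standard fact about common central projections that you correctly invoke. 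Your proof in fact yields slightly more than the statement: the scalar-index representative you construct is the \emph{minimiser} of the norm index within the fixed-partials family, which dovetails with the subsequent lemma on $\varepsilon_0$; what it does not provide is any closed-form expression for the equalised value, but none is claimed.
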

This theorem takes us to the last following lemma.
\begin{lem}
Let $\mathcal{N}\subset\mathcal{M}\subset\mathcal{B}(\mathcal{H})$
be a connected inclusion of finite norm index \eqref{scalar_index}. Then, there exists a conditional
expectation $\varepsilon_{0}\in C\left(\mathcal{M},\mathcal{N}\right)$
such that
\begin{equation}
\lambda(\varepsilon_{0})=\min\left\{ \lambda(\varepsilon)\,:\,\varepsilon\in C\left(\mathcal{M},\mathcal{N}\right)\right\} =:[\mathcal{M}:\mathcal{N}]\,.
\end{equation}
Moreover, we have $\varepsilon_{0}\in\hat{C}\left(\mathcal{M},\mathcal{N}\right)$,
and its partial conditional expectations $\varepsilon_{0,jk}\in\hat{C}\left(\mathcal{M}_{jk},\mathcal{N}_{jk}\right)$
are the ones which mimimize the index for the inclusion of factors
$\mathcal{N}_{jk}\subset\mathcal{M}_{jk}$.
\end{lem}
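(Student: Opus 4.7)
The strategy is to reduce the minimization over the full space $C(\mathcal{M},\mathcal{N})$ to a finite-dimensional optimization inside $\hat{C}(\mathcal{M},\mathcal{N})$, which then decouples into the already-solved factor-case minimizations for the partial conditional expectations together with a weight-choice problem for the probabilities $p_{kj}$ that glue them together.

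First I would invoke the preceding theorem from appendix C: for any $\varepsilon\in C(\mathcal{M},\mathcal{N})$ of finite norm index there exists $\tilde{\varepsilon}\in\hat{C}(\mathcal{M},\mathcal{N})$ with the same partial conditional expectations $\varepsilon_{jk}\in C(\mathcal{M}_{jk},\mathcal{N}_{jk})$ and $\lambda(\tilde{\varepsilon})\leq\lambda(\varepsilon)$. Consequently, the infimum of $\lambda$ over $C(\mathcal{M},\mathcal{N})$ equals the infimum over $\hat{C}(\mathcal{M},\mathcal{N})$, so it is enough to exhibit the minimizer inside $\hat{C}(\mathcal{M},\mathcal{N})$.

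Next, inside $\hat{C}(\mathcal{M},\mathcal{N})$ the scalar index is $\lambda(\varepsilon)=\sum_{k}p_{kj}^{-1}\lambda(\varepsilon_{jk})$, independent of $j$, so the free parameters separate cleanly into the partials $\varepsilon_{jk}$ and the probability data $\{p_{kj}\}$. Because $\lambda(\varepsilon_{jk})$ enters with a nonnegative coefficient, decreasing any partial index can only decrease $\lambda(\varepsilon)$; I therefore fix each $\varepsilon_{jk}$ to the unique minimizer $\varepsilon_{0,jk}$ supplied by the Kosaki--Longo result for inclusions of factors, with minimum value $\lambda_{0,jk}:=[\mathcal{M}_{jk}:\mathcal{N}_{jk}]$. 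Any global minimizer must have these partials, which already takes care of the second half of the lemma.

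The remaining task is to minimize $\lambda=\sum_{k}p_{kj}^{-1}\lambda_{0,jk}$ over the positive probabilities $\{p_{kj}\}$ subject to $\sum_{j}p_{kj}=1$ for each $k$ and the $\hat{C}$-condition that the sum be $j$-independent. The objective is strictly convex in each $p_{kj}$ and blows up at the boundary $p_{kj}\to 0^{+}$, so once feasibility is established a minimizer exists and is unique. Feasibility is guaranteed by the preceding theorem (it produces at least one $\tilde\varepsilon\in\hat{C}$), and the connectedness hypothesis $\mathcal{Z}(\mathcal{M})\cap\mathcal{Z}(\mathcal{N})=\mathds{1}$, which is equivalent to the connectedness of the bipartite multiplicity graph with edges $\{(j,k):\mu_{kj}\neq 0\}$, is exactly what is needed to make the $j$-independence constraints compatible and to pin down the unique solution. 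The main obstacle is this last step: verifying that the linear $j$-independence conditions, combined with the simplex constraints $\sum_{j}p_{kj}=1$, cut out a nonempty feasible set on which the strictly convex objective attains a unique interior minimum precisely when the inclusion is connected. Once this is done, assembling $\varepsilon_{0}$ from the $\varepsilon_{0,jk}$ and the optimal weights yields an element of $\hat{C}(\mathcal{M},\mathcal{N})$ with $\lambda(\varepsilon_{0})=[\mathcal{M}:\mathcal{N}]$ having exactly the claimed partials.
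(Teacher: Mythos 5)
First, note that the paper itself offers no proof of this lemma: it is stated as an immediate consequence of the preceding theorem, which is quoted from \cite{teruya}, so there is no in-paper argument to compare against line by line. Your overall architecture --- use that theorem to reduce the minimization of the norm index \eqref{scalar_index} from $C(\mathcal{M},\mathcal{N})$ to $\hat{C}(\mathcal{M},\mathcal{N})$, minimize each partial index via the Kosaki--Longo result (in finite dimensions, via \eqref{index_finite_factors}, giving $\lambda_{0,jk}=\mu_{kj}^2$), and then optimize the probabilities $p_{kj}$ in \eqref{index_op} --- is the natural reconstruction and is surely what the authors intend. Existence, by the way, can be obtained more cheaply than through your feasibility discussion: by Lemma \ref{lemma_cem}, $C(\mathcal{M},\mathcal{N})$ is a compact set (a product of state spaces), $\varepsilon\mapsto\lambda(\varepsilon)=\max_j c_j$ is lower semicontinuous there, so the minimum is attained; the theorem of \cite{teruya} then replaces a minimizer by one in $\hat{C}(\mathcal{M},\mathcal{N})$ with the same partials and no larger index.

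The genuine gap is the step you yourself flag as ``the main obstacle,'' and it cannot be closed by the convexity argument you propose. The $j$-independence conditions $c_j=\sum_k \lambda_{jk}/p_{kj}=\mathrm{const}$ are \emph{not} linear in the $p_{kj}$ (they are linear in $1/p_{kj}$, but then the normalizations $\sum_j p_{kj}=1$ become nonlinear), so the feasible set is not convex and ``strictly convex objective on the feasible set'' does not yield a unique interior minimum. Relatedly, your claim that \emph{any} global minimizer must carry the minimal partials is not established by the monotonicity remark alone: if $\varepsilon_0\in\hat{C}(\mathcal{M},\mathcal{N})$ has all $c_j$ equal to $\lambda$ and you replace one non-minimal partial $\varepsilon_{j_0k_0}$ by the minimal one, only $c_{j_0}$ drops below $\lambda$ while $\max_j c_j$ stays equal to $\lambda$ whenever $z_{\mathcal{M}}>1$; to get a strict improvement you must redistribute the probabilities so that the slack at $j_0$ propagates to every $j$, and it is exactly here that connectedness of the bipartite graph $\{(j,k):\mu_{kj}\neq 0\}$ enters. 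That redistribution argument (or, equivalently, the explicit Lagrange/Perron--Frobenius-type analysis of the constrained minimization of $\sum_k \lambda_{0,jk}/p_{kj}$) is the actual content of the lemma beyond the quoted theorem, and it is missing from your proposal; as written, you have proved that \emph{some} minimizer lies in $\hat{C}(\mathcal{M},\mathcal{N})$, but neither the uniqueness asserted in Theorem \ref{cer_thm} nor the statement that \emph{the} minimizer's partials are the factor-minimal ones.
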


\end{document}